\newtheorem{theorem}{Theorem}[section]
\newtheorem{prop}[theorem]{Proposition}
\newcommand{\E}{{\mathbb E}}
\newcommand{\R}{{\mathbb R  }}
\newcommand{\erf}{\operatorname{erf}}
\begin{document}
%Title of paper
\title{Dynamics of the Desai-Zwanzig model in multi-well and random energy landscapes}

\author{Susana N. Gomes 
\thanks{Mathematics Institute, University of Warwick (susana.gomes@warwick.ac.uk)}\and Serafim Kalliadasis 
\thanks{Department of Chemical Engineering, Imperial College London} \and
Grigorios A. Pavliotis
\thanks{Department of Mathematics, Imperial College London} \and
Petr Yatsyshin
\thanks{Department of Chemical Engineering, Imperial College London}}

\date{\today}

%\pacs{Valid PACS appear here}% PACS, the Physics and Astronomy
                             % Classification Scheme.
%\keywords{Suggested keywords}%Use showkeys class option if keyword
                              %display desired
\maketitle

\begin{abstract}
We analyze a variant of the Desai-Zwanzig model [J. Stat. Phys. {\bf 19}
1-24 (1978)]. In particular, we study stationary states of the mean field
limit for a system of weakly interacting diffusions moving in a multi-well
potential energy landscape, coupled via a Curie-Weiss type (quadratic)
interaction potential. The location and depth of the local minima of the
potential are either deterministic or random. We characterize the structure
and nature of bifurcations and phase transitions for this system, by means
of extensive numerical simulations and of analytical calculations for an
explicitly solvable model. Our numerical experiments are based on Monte
Carlo simulations, the numerical solution of the time-dependent nonlinear
Fokker-Planck (McKean-Vlasov equation), the minimization of the free energy
functional and a continuation algorithm for the stationary solutions.
\end{abstract}

%%%%%%%%%%%%%%%%%%%%%%%%%%%%%%%%%%%%%%%%%%%%%%%%%
%

\section{Introduction}\label{sec:intro}
%
%%%%%%%%%%%%%%%%%%%%%%%%%%%%%%%%%%%%%%%%%%%%%%%%%%

Systems of interacting particles, often subject to thermal noise, arise in a
wide spectrum of natural phenomena and applications, ranging from plasma
physics and galactic dynamics~\cite{BinneyTremaine2008} to dynamical
density-functional theory (DDFT)~\cite{Ben_2012a,Ben_2012b}, mathematical
biology~\cite{Farkhooi2017,Lucon2016} and even in mathematical models in
social sciences~\cite{GPY2017,Motsch2014}.  As examples of models of
interacting ``agents" in a noisy environment that appear in the social
sciences, including crowd dynamics, we mention the modeling of cooperative
behavior~\cite{Dawson1983}, risk management~\cite{GPY2012} and opinion
formation~\cite{GPY2017}. Other recent applications that have motivated this
work are global optimization~\cite{Pinnau_al2017}, active
media~\cite{Bain-Bartolo_2017} and machine learning. Indeed, it has been
shown recently~\cite{rotskoff_vanden-eijnden2018, SirignanoSpiliopoulos2018}
that ``stochastic gradient descent", the optimization algorithm used in the
training of neural networks, can be represented  as the evolution of a
particle system with interactions governed by a potential related to the
objective function that is used to train the network. Several of the issues
that we study here, such as phase transitions and the effect of nonconvexity,
are of great interest in the context of the training of neural networks.

For weakly interacting diffusions, one can pass rigorously to the mean-field
limit leading to the McKean-Vlasov equation, a nonlinear nonlocal
Fokker-Planck type Eq.~\cite{mckean, Dawson1983}. Unlike finite systems
of interacting diffusions, whose law (probability density function) is
governed by the linear Fokker-Planck equation, the McKean-Vlasov equation can
exhibit phase transitions~\cite{Dawson1983}. Indeed, whereas a finite system
of interacting overdamped Langevin diffusions moving in a confining potential
is always ergodic with respect to the Gibbs measure~\cite[Ch. 4]{Pavl2014},
the McKean-Vlasov equation with a non-convex confining potential, can have
several stationary solutions at low temperatures~\cite{Dawson1983,
Tamura1984}. As a matter of fact, the number of stationary solutions depends
on the number of metastable states (local minima) of the confining
potential~\cite{Tugaut2014}. A complete rigorous analysis of phase
transitions, both continuous and discontinuous, for the McKean-Vlasov
dynamics in a box with periodic boundary conditions and for non-convex (i.e.
non-H stable) interaction potentials is presented
in~\cite{Pavliotis_al_2018}. The mean field limit for non-Markovian
interacting particles, including the effect of memory on the bifurcation
diagram, is studied in~\cite{DuongPavliotis2018}.

The main purpose of this study is to scrutinize the dynamics of a system of
weakly interacting diffusions and, in particular, characterize bifurcations
and phase transitions for this system in the presence of a multi-well
confining potential which can have random locations and depths of local
minima, interacting under a quadratic Curie-Weiss potential. An example of a
deterministic multi-well potential is given in Fig.~\ref{Fig:potentials}. It
is a modified version of the so-called M$\ddot{\rm u}$ller-Brown
potential~\cite{Muller-Brown_1979}, a canonical potential surface used often
as a prototype in theoretical chemistry including reaction
dynamics~\cite{Kawai-Komatsuzaki_2010}, but also theoretical biology
including protein folding~\cite{Chekmarev_2015}. The potential is also often
adopted as a prototype to test the performance of computational optimization
algorithms to e.g. obtain reaction paths~\cite{Bonfanti-Kob_2017}. Multi-well
potentials/rugged energy landscapes have numerous applications, from
materials science and catalysis where (surface) diffusion in a multiscale
potential is critical to understanding how atoms or molecules adsorb on
catalytic surfaces and react, to droplet motion on chemically heterogeneous
substrates~\cite{Keil_2012}.
\begin{figure}
\includegraphics[width=0.4\textwidth]{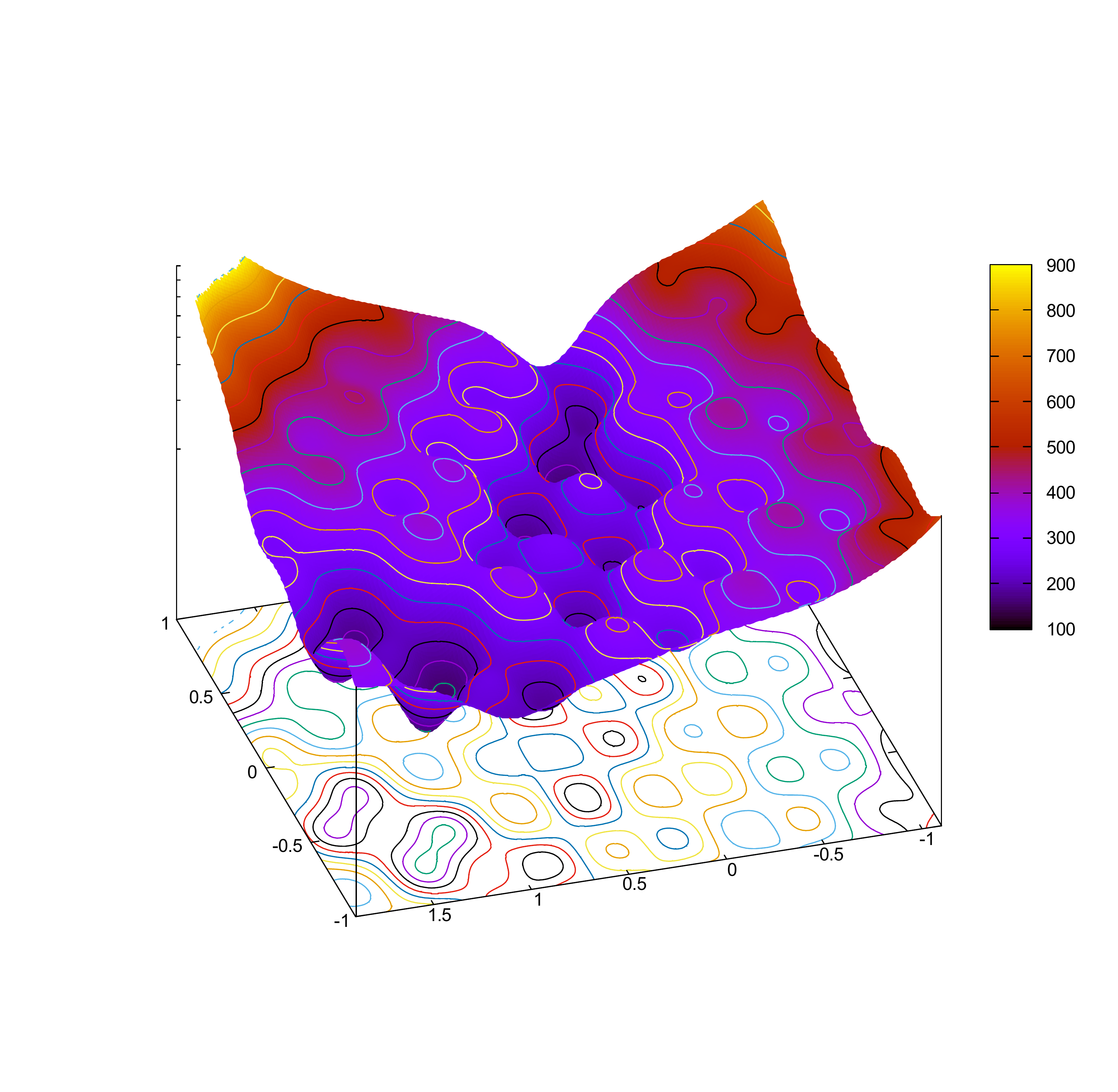}
\centering
\caption{A $2D$ multiscale potential.}
\label{Fig:potentials}
\end{figure}
Our study builds upon earlier work~\cite{GomesPavliotis2017}. There, the mean
field limit for interacting diffusions in a two-scale, locally periodic
potential was considered. This problem was then studied using tools from
multiscale analysis, and in particular periodic homogenization for parabolic
PDEs~\cite{PavlSt08}. In contrast, here the focus is on multi-well potentials
(either deterministic or random) that do not have a periodic structure, and,
consequently, the theory of periodic homogenization is not applicable.

In particular, we will offer a complete bifurcation analysis and explicit
characterization of phase transitions for the McKean-Vlasov equation in one
dimension for model multi-well potentials with an arbitrary number of local
minima. And we will also study phase transitions when the number of local
minima tends to infinity.

Our starting point is a system of interacting particles in one dimension,
moving in a confining potential, $V(\cdot)$, and which interact through an
interaction potential $W(\cdot)$ which we consider to be of a quadratic
Curie-Weiss type (i.e., $W(x) = \frac{x^2}{2}$):
\begin{equation}\label{e:sde-general} d X^i_t =
\left(-V'(X_t^i) - \frac{\theta}{N} \sum_{j=1}^N (X_t^i-X_t^j)\right) \, dt +
\sqrt{2 \beta^{-1}} \, dB_t^i,
\end{equation}
for $i=1,\dots,N$. Here $\{  X_t^i \}_{i=1}^N$ denotes the position of the
interacting agents, $\theta$ the strength of the interaction between the
agents, $\{  B_t^i \}_{i=1}^N$ standard independent one-dimensional Brownian
motions and $\beta$ denotes the inverse temperature. The total energy
(Hamiltonian) of the system of interacting diffusions~\eqref{e:sde-general}
is
\begin{equation}\label{e:energy}
W_N({\bf X}) = \sum_{\ell =1}^N V\left(X^{\ell}\right) + \frac{\theta}{4 N} \sum_{n=1}^N \sum_{\ell =1}^N \left(X^n - X^{\ell}\right)^2,
\end{equation}
where ${\bf X}=(X^1,\dots,X^N)$.

Passing rigorously to the mean field limit as $N\rightarrow\infty$ using, for example, martingale techniques~\cite{Dawson1983,Gartner1988,Oelschlager1984},
and under appropriate assumptions on the confining potential and on the initial conditions (propagation of chaos), is a well-studied problem.
Formally, using the law of large numbers we deduce that
$$
\lim_{N \rightarrow +\infty} \frac{1}{N} \sum_{j=1}^N X^j_t = \E X_t,
$$
where the expectation is taken with respect to the ``1-particle" distribution
function $p(x,t)$ -- This corresponds to the mean field ansatz for the
$N-$particle distribution function, $p_N(x_1, \dots x_N,t ) = \prod_{n=1}^N
p(x_n,t)$ and taking the limit as $N\rightarrow \infty$;
see~\cite{MartzelAslangul2001,balescu97}.--  Passing, formally, to the limit
as $N\rightarrow \infty$ in the stochastic differential
equation~\eqref{e:sde-general}, we obtain the McKean stochastic differential
equation (SDE)
\begin{equation}\label{e:mckean-sde}
d X_t = -V'(X_t) \, dt - \theta (X_t - \E X_t) \, dt + \sqrt{2 \beta^{-1}} \, dB_t.
\end{equation}
The Fokker-Planck equation corresponding to this SDE is the McKean-Vlasov equation~\cite{frank04,McKean1966,mckean}
\begin{equation}\label{e:mckean-vlasov}
\frac{\partial p}{\partial t} = \frac{\partial}{\partial x} \left(V'(x) p + \theta \left(W'\star p\right) p + \beta^{-1} \frac{\partial p}{\partial x} \right),
\end{equation}
where $\star$ denotes the convolution operator. The McKean-Vlasov equation is
a nonlinear, nonlocal equation, sometimes referred to as the
McKean-Vlasov-Fokker-Planck equation. It is a gradient flow, with respect to
the Wasserstein metric, for the free energy functional
\begin{equation}\label{e:free-energy}
\mathcal{F}[\rho] = \beta^{-1} \int \rho \ln \rho \, dx + \int V \rho \, dx + \frac{\theta}{2} \int \int W(x-y) \rho(x) \rho(y) \, dx \, dy.
\end{equation}
Background material on the McKean-Vlasov equation can be found in,
e.g.~\cite{frank04, CMV2006,Villani2003}. The equation belongs to the broad
class of (entropic) gradient flows. Various technical questions, such as the
connection between gradient flows and large deviations are addressed
in~\cite{Dirr-etal_2016,Adams-etal_2018}.

It is noteworthy that the system of interacting Langevin
equations~\eqref{e:sde-general} as well as the potential
energy~\eqref{e:energy} retain the basic features of the models studied in
DDFT for classical fluids~\cite{Ben_2012a,Ben_2012b}. One approach used to
derive DDFT is to start with the Langevin dynamics of Brownian particles to
obtain a Fokker-Planck equation for the $n$-particle probability
distribution. A formal BBGKY hierarchy then is used to obtain a closed
equation for the density distribution. The main assumption is an equilibrium
thermodynamic sum rule, the so-called adiabatic approximation, by which the
higher-body correlations are approximated by those of an equilibrium fluid
with the same density distribution. Including hydrodynamic interactions in
DDFT to obtain a hydrodynamic description that includes intermolecular
interactions is non-trivial~\cite{Ben_2012b,Ben_2013}.

On the other hand, the Vlasov equation, originally derived for ionized gases,
i.e. plasma with long-range (Coulomb) forces, describing the time evolution
of the distribution function of plasma, is somewhere halfway between an
$n$-particle model and the hydrodynamic description. The latter can be
rigorously obtained starting from the Langevin dynamics assuming there is no
bath, replace the potential with the Boltzman collision operator to get the
Boltzman equation, and then use homogenization (as in ~\cite{Ben_2012a}). But
the standard kinetic approach based on the Boltzman equation cannot be
applied for Coulomb forces which are long range and cannot be treated as
collisions (unless a gas is weakly ionized, in which case the charge-charge
interactions are negligible with respect to collisions with neutrals).
Instead, one has to start with the collisionless Boltzman equation and
appropriately adapt it to plasma by also utilizing Maxwell's equations (but
in most cases of practical interest that time variations of the fields in the
Maxwell's equations are negligible) and the quasi-static fields are
considered instead.

The finite dimensional dynamics~\eqref{e:sde-general} has a unique invariant measure.
Indeed, the process ${\bf X}_t$ defined in~\eqref{e:sde-general} is always
ergodic, and in fact reversible, with respect to the Gibbs measure~\cite[Ch. 4]{Pavl2014},
\begin{subequations}\label{e:gibbs-N}
\begin{eqnarray}
\mu_N (dx) &= \frac{1}{Z_N} e^{-\beta W_N(x^1, \dots x^N)} \, dx^1 \dots dx^N, \\
Z_N &= \int_{\R^N} e^{-\beta W_N(x^1, \dots x^N)} \, dx^1 \dots dx^N
\end{eqnarray}
\end{subequations}
where $W_N(\cdot)$ is given by~\eqref{e:energy}. On the other hand, the
McKean dynamics~\eqref{e:mckean-sde} and the corresponding
McKean-Vlasov-Fokker-Planck equation~\eqref{e:mckean-vlasov} can have more
than one invariant measures, for nonconvex confining potentials and at
sufficiently low temperatures~\cite{Dawson1983,Tamura1984}. This is not
surprising, since the McKean-Vlasov equation is a nonlinear, nonlocal PDE,
as already noted, and the standard uniqueness of solutions for the linear
(stationary) Fokker-Planck equation does not apply~\cite{BKRS2015}.

The density of the invariant measure(s) for the McKean dynamics~\eqref{e:mckean-sde} satisfies the stationary nonlinear Fokker-Planck equation
\begin{equation}\label{e:mckean-station}
\frac{\partial}{\partial x} \left(V'(x) p_{\infty} + \theta \left(W'\star p \right) p_{\infty} + \beta^{-1} \frac{\partial p_{\infty}}{\partial x} \right) =0.
\end{equation}
Based on earlier work~\cite{Dawson1983,Tamura1984}, it is by now well understood that the number of invariant measures, i.e. the number of solutions
to~\eqref{e:mckean-station}, is related to the number of metastable states (local minima) of  the confining potential -- see~\cite{Tugaut2014} and the references therein.

For the Curie-Weiss (i.e. quadratic) interaction potential, we can write
Eq.~\eqref{e:mckean-vlasov} as a Fokker-Planck equation with a dynamic
constraint
\begin{eqnarray}
\frac{\partial p}{\partial t} &=& \beta^{-1}\frac{\partial^2 p}{\partial x^2} +\frac{\partial}{\partial x} \left(V'(x)p - \theta \left(m-x\right) p \right),\\
m &=& R(m) = \int_\R xp(x,t) \ dx,
\end{eqnarray}
and, from the corresponding steady-state equation, a one-parameter family of
solutions to the stationary McKean-Vlasov equation~\eqref{e:mckean-station}
can be obtained:
\begin{subequations}\label{e:inv-meas-mckean}
\begin{eqnarray}
p_{\infty}(x ; \theta, \beta, m) &=& \frac{1}{Z(\theta, \beta ; m )} e^{- \beta \left( V(x) + \theta \left(\frac{1}{2}x^2 - x m \right) \right)}, \\
\label{e:inv-meas-mckean-b} Z(\theta, \beta ; m ) &=& \int_{\R} e^{- \beta \left( V(x) + \theta \left(\frac{1}{2}x^2 - x m \right) \right)} \, dx.
\end{eqnarray}
\end{subequations}
This one-parameter family of probability densities is subject, of course, to the constraint that it provides us with the correct formula for the first moment:
\begin{equation}\label{e:self-consist}
m = \int_{\R} x p_{\infty}(x ; \theta, \beta, m) \, dx =: R(m; \theta, \beta).
\end{equation}
We will refer to this as the {\bf selfconsistency} equation and it will be
the main object of study of this paper. Once a solution
to~\eqref{e:self-consist} has been obtained, substitution back
into~\eqref{e:inv-meas-mckean} yields a formula for the invariant density
$p_{\infty}(x ; \theta, \beta, m)$.

Clearly, the number of invariant measures of the McKean-Vlasov dynamics is
determined by the number of solutions to the selfconsistency
equation~\eqref{e:self-consist}. It is well known and not difficult to prove
that for symmetric nonconvex confining potentials a unique invariant measure
exists at sufficiently high temperatures, whereas more than one invariant
measures exist below a critical temperature $\beta^{-1}_c$~\cite[Thm.
3.3.2]{Dawson1983},~\cite[Thm. 4.1, Thm. 4.2]{Tamura1984}, see
also~\cite{shiino1987}. In particular, for symmetric potentials, $m = 0$ is
always a solution to the selfconsistency equation~\eqref{e:self-consist}.
Above $\beta_c$, i.e. at sufficiently low temperatures, the zero solution
loses stability and a new branch bifurcates from the $m = 0$
solution~\cite{shiino1987}. This second-order phase transition is similar to
the one familiar from the theory of magnetization and the study of the Ising
model. It will become clear later on this study that for multi-well
potentials the value of $\theta$ also plays a role on the type of
bifurcations obtained and therefore it is important to keep both parameters
in our analysis.

Another important property of the solutions of the Fokker-Planck equation~\eqref{e:mckean-vlasov} is the critical temperature $\beta_C$ at which pitchfork bifurcations from the mean
zero solution occur. This critical temperature is a function of $\theta$, and is given~\cite{shiino1987} by the solution to the equation
\begin{equation}\label{eq:beta_C}
\operatorname{Var}(m=0, \theta,\beta) := \int x^2 p_\infty(x;m=0,\theta,\beta) \ dx = \beta^{-1}\theta^{-1}.
\end{equation}
This equation can be solved numerically for the potentials we will study in
this paper.

The structure and number of equilibrium states for the generalized
Desai-Zwanzing model that we consider can be studied using four different
approaches:

\begin{enumerate}

\item As the invariant measure(s) of the particle dynamics~\eqref{e:sde-general}, in the limit as the number of particles becomes infinite.

\item As the long time limit of solutions to the time-dependent nonlinear Fokker-Planck equation~\eqref{e:mckean-vlasov}.

\item As minimizers of the free energy functional~\eqref{e:free-energy}.

\item In terms of solutions to the self-consistency equation~\eqref{e:self-consist}.

\end{enumerate}
We will use all of these in order to construct the bifurcation diagrams for
the stationary states of~\eqref{e:mckean-vlasov}.

The rest of the paper is organized as follows. In Section~\ref{s:models} we
briefly summarize the models (i.e., the types of confining potentials) we
consider in our study, and present the different methodologies we will use to
construct the bifurcation diagrams and analyze the stability of each branch.
In Section~\ref{sec:numeric} we present extensive numerical experiments we
performed to obtain the bifurcation diagram for different types of potentials,
including calculations of the free-energy surfaces associated with each system,
of the bifurcation diagrams of the first moment $m$ as a function of the inverse
temperature $\beta$ and of the critical temperature $\beta_C$ as a function of
$\theta$ as well as time-dependent simulations of the particle system and the
corresponding McKean-Vlasov equation. A discussion and conclusions are
offered in Section~\ref{s:conclusions}.

%%%%%%%%%%%%%%%%%%%%%%%%%%%%%%%%%%%%%%%%%%%%%%%%%%%%%%%
%
\section{Models studied and methodology}\label{s:models}
%
%%%%%%%%%%%%%%%%%%%%%%%%%%%%%%%%%%%%%%%%%%%%%%%%%%%%%%%%%

In this section we outline the model confining potentials that we will
consider and we also provide details of the mathematical and computational
techniques that we will use.

\subsection{Models studied in this paper}
\label{subsec:models}

Consider the system of weakly interacting diffusions given
in~\eqref{e:sde-general}. As already emphasized, the interaction is taken to
be of the Curie-Weiss type ($W(x) = \frac{x^2}{2}$) and different types of
confining multi-well potentials will be considered. In particular, we will
study the following potentials.
\begin{enumerate}
\item Polynomial potentials of the form (see, e.g., Fig.~\ref{poly_pot})
\begin{equation}\label{eq:poly_intro}
V(x) = \sum_{\ell=1}^{M} a_\ell x^{2m}.
\end{equation}
\begin{figure}[t!]
\centering
    \includegraphics[width=0.75\linewidth]{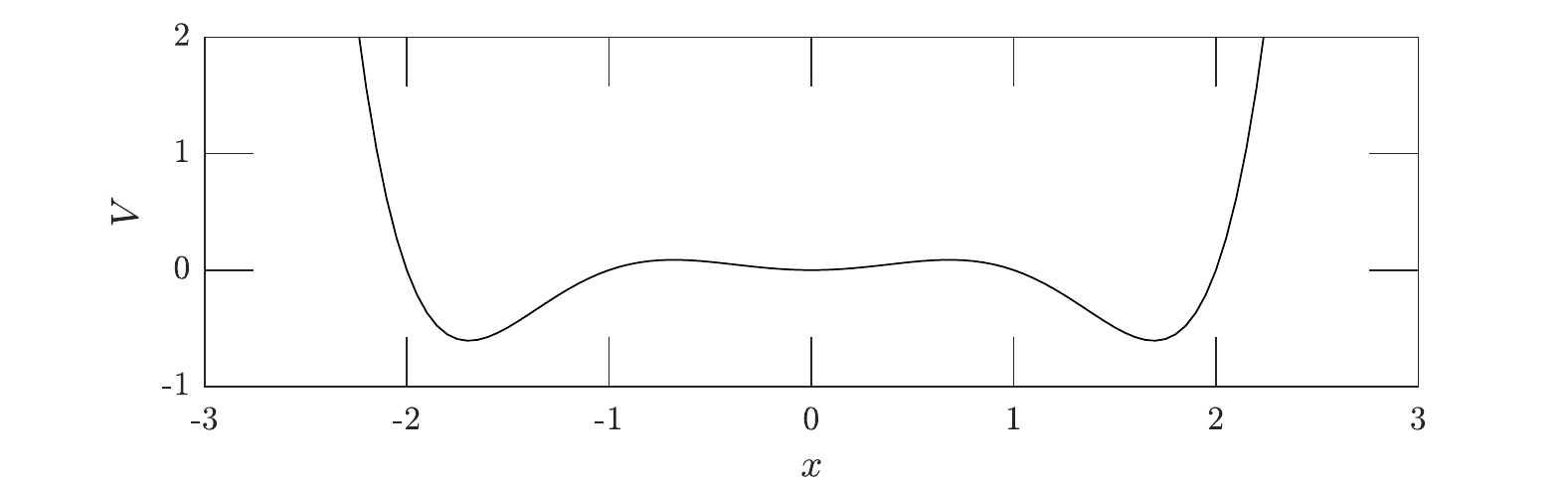}
    \caption{A polynomial potential of the form~\eqref{eq:poly_intro}, where $M=3$.
    \label{poly_pot}}
\end{figure}

\item Rational potentials (\cite{zofiaThesis}) with an arbitrary number of
    local minima and with (possibly) random location and depths of local
    minima, see, e.g., Fig.~\ref{fig:pot-random}.
\begin{equation}\label{eq:zofia_intro}
V(x) = \frac{1}{\sum_{\ell=-M}^M \delta_{\ell} |x - c_{\ell} x_{\ell} |^{-2}},
\end{equation}
where we consider both deterministic and random distributions of
$\{\delta_{\ell}, c_{\ell}\}$ and where in the random case we take these
distributions to be uniform.

\begin{figure}[h!]
\centering
\includegraphics[width=0.75\linewidth]{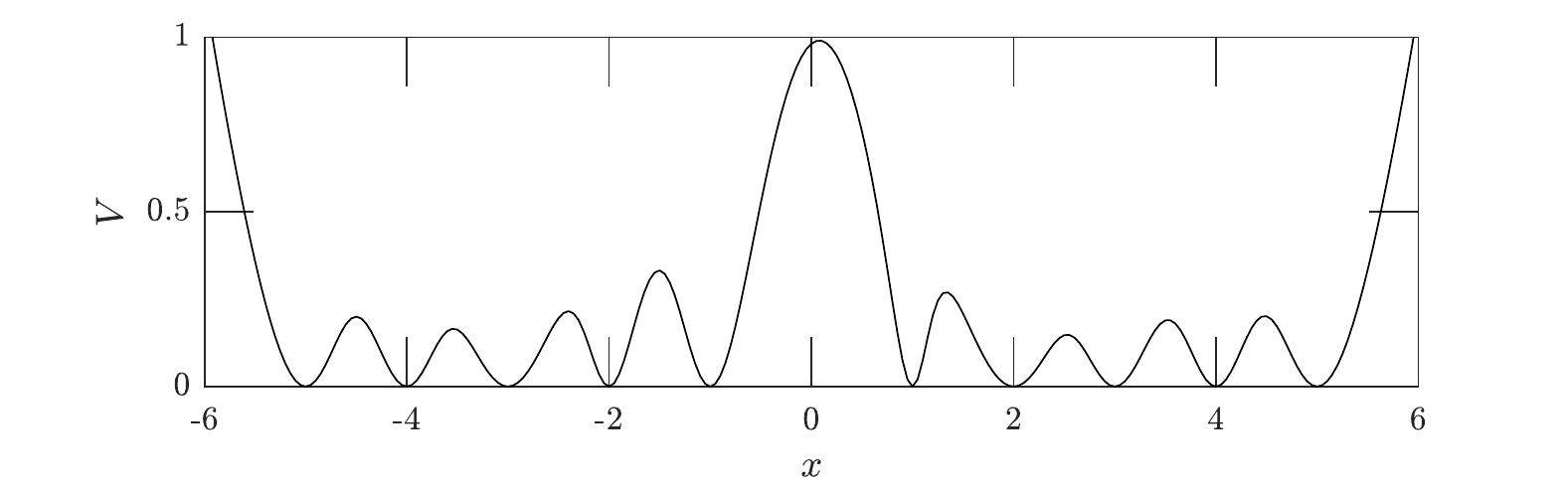}
\caption{A realization of a potential of the form~\eqref{eq:zofia_intro} with $10$ local minima located at the integers between $-5$ and $5$ ($c_\ell = 1$) and separated by arbitrary heights ($\delta_\ell\sim U([0,1])$).}\label{fig:pot-random}
\end{figure}

\item Piecewise linear potentials with quadratic growth at infinity,
\begin{equation}\label{eq:piecewise_intro}
V(x) = \left\{\begin{array}{lr}
\frac{x^2 -x_M^2}{2}, & \textrm{ if } |x|\geq x_M, \\
\frac{(H_i-h_i)x}{x_{i+\frac{1}{2}}-x_i} +\frac{h_ix_{i+\frac{1}{2}}-H_ix_i}{x_{i+\frac{1}{2}}-x_i},& \textrm{ if } x_i<x<x_{i+\frac{1}{2}}, \\
\frac{(h_i-H_i)x}{x_{i+1}-x_{i+\frac{1}{2}}} + \frac{H_ix_{i+1}-h_ix_{i+\frac{1}{2}}}{x_{i+1}-x_{i+\frac{1}{2}}},  & \textrm{ if } x_{i+\frac{1}{2}}<x<x_{i+1}.
\end{array}\right.
\end{equation}
\end{enumerate}

\subsection{Methodology}
\label{sub:method}

Our aim is to study the bifurcation diagram of the invariant measures of the
system of SDEs~\eqref{e:sde-general}. We will do that by considering the
mean-field limit of this system, given by Eq.~\eqref{e:mckean-sde}. In this
limit, the density of the particles satisfies the nonlinear nonlocal
Fokker-Planck equation~\eqref{e:mckean-vlasov}. The invariant measure(s) of
the system~\eqref{e:sde-general} satisfy the stationary Fokker-Planck
equation~\eqref{e:mckean-station}. Depending on the interaction potential,
and the parameters $\theta$ and $\beta$, there will exist only one (for
sufficiently small beta) or multiple (for large $\beta$) solutions. The
number of these solutions depends on the number of local minima and maxima of
the confining potential $V$.

We construct the bifurcation diagram using the first moment $m=m_1 = \mathbb{E}(X_t)$
as the order parameter and plot it as a function of the inverse temperature
$\beta$ for a fixed value of $\theta$. We will use two methods to obtain the
bifurcation diagram:
\begin{enumerate}
\item Using the fact that equilibrium states are minimizers of the
    free-energy functional~\eqref{e:free-energy}. We can then find the
    equilibrium points using differential geometry techniques. This
    methodology has the advantage that it immediately gives us the
    stability of each branch - stable solutions are local minima of the
    free-energy, while its local maxima are unstable solutions.

\item Arclength continuation to solve the self-consistency
    equation~\eqref{e:self-consist}. For the case of polynomial potentials,
    we can also use this technique to solve a system of ordinary
    differential equations (ODEs) for the moments (details are given in
    section~\ref{s:num_poly}). Our continuation scheme makes use of a
    modification of \textsc{Matlab}'s \textsf{matcont} routine (details are
    given in section \ref{Continuation}).
\end{enumerate}

Both methods have been used successfully in our previous studies to perform
detailed parametric studies and compute adsorption isotherms, bifurcations of
equilibrium states, phase diagrams and critical points for fluids in
confinement using DFT
(e.g.~\cite{Peter_2012,Peter_2013,Peter_2016,Peter_2018}).

Our results are confirmed by performing a careful comparison of the
bifurcation diagrams obtained with the long-time behavior of solutions to the
Fokker-Planck equation, as well as Monte-Carlo (MC) simulations of the
corresponding particle system.

\subsubsection{Free-energy formulation}
We make use the fact that the stationary solutions of the Fokker-Planck
equation~\eqref{e:mckean-vlasov} are equilibrium points of the free energy
given by Eq.~\eqref{e:free-energy}. In particular, since we know the form of
the steady solutions $p_\infty(m;\theta,\beta)$, we can compute the
free-energy surface as a function of the two arguments, $m$ and $\beta$, for
a fixed value of $\theta$, $\mathcal{F}(m,\beta;\theta)$. By computing its
equilibrium points, we can then plot the desired bifurcation diagram. This
methodology also allows us to immediately evaluate the stability of each
branch, since local minima correspond to stable solutions, while local maxima
are unstable ones. The bifurcation diagrams obtained with this method suffer
from poor resolution near branching points. However, the resolution at higher
values of $\beta$ allows us to easily find initial guesses for the arclength
continuation process which we describe below, and also guarantees that we
have information about all the existing branches.

A useful observation which we will make use of is that the free energy of the equilibrium states can be calculated in a quite explicit form which depends only on the partition function and on the mean:

\begin{prop}
The free energy of an equilibrium state~\eqref{e:inv-meas-mckean}-\eqref{e:self-consist} is given by
\[
\mathcal{F}[p] = -\beta^{-1}\log Z + \frac{\theta}{2} m^2.
\]
In particular, when $m=0$ we have
\[
\mathcal{F}[p] = -\beta^{-1}\log Z.
\]
\end{prop}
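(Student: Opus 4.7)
The plan is to substitute the explicit form of the equilibrium density $p_\infty(x;\theta,\beta,m)$ from~\eqref{e:inv-meas-mckean} directly into the free energy functional~\eqref{e:free-energy} and simplify. The strategy breaks the computation into three pieces (entropy, potential energy, interaction energy), each of which produces either a moment of $p_\infty$ or a $\log Z$ term, and then I exploit the self-consistency relation~\eqref{e:self-consist}, $\int x\, p_\infty\, dx = m$, to collapse everything to the stated form.

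First I would take the logarithm of~\eqref{e:inv-meas-mckean}:
\[
\ln p_\infty(x) = -\ln Z - \beta\!\left(V(x) + \tfrac{\theta}{2}x^2 - \theta x m\right),
\]
and multiply by $\beta^{-1} p_\infty$ before integrating, giving
\[
\beta^{-1}\!\int p_\infty \ln p_\infty\, dx
= -\beta^{-1}\ln Z \;-\; \int V p_\infty\, dx \;-\; \tfrac{\theta}{2}\!\int x^2 p_\infty\, dx \;+\; \theta m\!\int x\, p_\infty\, dx .
\]
Using~\eqref{e:self-consist} the last integral is $m$, so the cross term becomes $\theta m^2$. The potential term $\int V p_\infty\, dx$ in~\eqref{e:free-energy} will exactly cancel its counterpart above.

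Next I would expand the Curie-Weiss interaction energy with $W(x)=x^2/2$:
\[
\tfrac{\theta}{2}\!\iint \tfrac{(x-y)^2}{2}\, p_\infty(x)p_\infty(y)\, dx\, dy
= \tfrac{\theta}{2}\!\left(\int x^2 p_\infty\, dx - m^2\right),
\]
again by~\eqref{e:self-consist}. Adding this to the simplified entropy and the potential-energy term, the $\int x^2 p_\infty\, dx$ contributions cancel, and one is left with $-\beta^{-1}\ln Z + \theta m^2 - \tfrac{\theta}{2}m^2 = -\beta^{-1}\ln Z + \tfrac{\theta}{2}m^2$, which is the claim. The $m=0$ case is immediate.

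There is no real obstacle here beyond bookkeeping: the only substantive step is recognizing that the cross term $\theta m\!\int x p_\infty\, dx$ from the entropy contributes $\theta m^2$, while the interaction contributes $-\tfrac{\theta}{2}m^2$, so the net coefficient of $m^2$ is $+\tfrac{\theta}{2}$ (not zero, as one might naively guess from a ``thermodynamic'' intuition that $\mathcal F = -\beta^{-1}\ln Z$). Making sure the signs and factors of $1/2$ are tracked through the three terms is the one place I would be careful.
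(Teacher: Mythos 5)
Your proof is correct and follows essentially the same route as the paper's: substitute the explicit equilibrium density~\eqref{e:inv-meas-mckean} into the free-energy functional~\eqref{e:free-energy}, expand the entropy and interaction terms, and use the self-consistency relation $\int x\,p_\infty\,dx=m$ so that the $\int V p$ and second-moment contributions cancel, leaving $-\beta^{-1}\log Z+\tfrac{\theta}{2}m^2$. The only (cosmetic) difference is that you keep the explicit quadratic exponent and impose $R(m)=m$ from the start, whereas the paper first rewrites the exponent via the convolution $W\star p$, obtains the intermediate expression $-\beta^{-1}\log Z+\theta R(m)\bigl(m-\tfrac{1}{2}R(m)\bigr)$ for the one-parameter family, and only then sets $R(m)=m$.
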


\begin{proof}

The free energy of a function $p$ is given by
\begin{equation*}
\mathcal{F}\left[p\right] = \beta^{-1}\int_\R p(x) \log(p(x)) \ dx+\int_\R V\left(x\right)p(x) \ dx + \frac{\theta}{2}\int_\R \int_\R W(x-y) p(x) p(y) \ dx \ dy.
\end{equation*}
The stationary solution(s) to the Fokker-Planck equation~\eqref{e:mckean-vlasov} are given by Eqs.~\eqref{e:inv-meas-mckean}-\eqref{e:self-consist}. Plugging this into
the expression for the free energy, we obtain
\begin{eqnarray}\label{e:free}
\mathcal{F}\left[p\right] &=& \beta^{-1}\int_\R p(x) \left(-\beta\left(V(x)+\theta (W\star p)\right)-\log Z\right) \ dx+\int_\R V\left(x\right)p(x) \ dx + \\
\nonumber & & + \frac{\theta}{2}\int_\R \int_\R W(x-y) p(x) p(y) \ dx \ dy \\
&=&- \beta^{-1}\log Z - \frac{\theta}{2}\int_\R\int_\R W(x-y)p(x)p(y) \ dx \ dy,
\end{eqnarray}
where we have used that $\int_\R p(x) \ dx = 1$ and the definition of convolution.

Replacing $W(x-y) = \frac{(x-y)^2}{2}$ and using
Eq.~\eqref{e:inv-meas-mckean}, yields
\begin{equation}
\mathcal{F}\left[p\right] =-\beta^{-1}\log Z + \theta R(m)\left(m - \frac{1}{2}R(m)\right).
\end{equation}

We note that when $p$ is an equilibrium solution of the Fokker-Planck equation, then $R(m) = m$ and we obtain
\[
\mathcal{F}[p] = -\beta^{-1}\log Z + \frac{\theta}{2} m^2,
\]
and, when $m=0$ we recover
\[
\mathcal{F}[p] = -\beta^{-1}\log Z.
\]

\end{proof}

\subsubsection{Arclength continuation}
\label{Continuation} The second method we use is arclength continuation of
solutions, for which we will use the Moore-Penrose quasi arclength
continuation algorithm. Rigorous mathematical construction of the full
arclength continuation methodology can be found
in~\cite{Krauskopf,ContAllgower}. Some useful practical aspects of
implementing arclength continuation are also given in the \textsf{MATLAB}
manual~\cite{matcont}. The idea is to solve the discretized nonlinear
algebraic equation~\eqref{e:inv-meas-mckean}-\eqref{e:self-consist} for a
given initial value of the control parameter, $\beta_0$, and a given initial
guess, $m_0$, relaxing the dependence on $\beta_0$ and adding a condition of
curve continuity in the phase space of solutions to the discretized problem.
The method then provides us with a way of following each branch by computing
tangent vectors.

We use arclength continuation to construct the bifurcation diagram of steady
solutions of~\eqref{e:mckean-vlasov} using the self-consistency
equation~\eqref{e:self-consist} and the system of equations for the moments
described in the next section. Arclength continuation is also used to solve
the equation for the critical temperature $\beta_C$ given by the solution
of~\eqref{eq:beta_C} as a function of $\theta$.

\subsubsection{Time dependent simulations}
To simulate the corresponding particle system, we perform MC simulations of
$N=1000$ particles evolving according to the system of
SDEs~\eqref{e:sde-general}. We use the Euler-Maruyama numerical scheme, with
time step $dt = 0.01$, and solve for long times to ensure that the solution
has converged to its invariant measure.

We also solve numerically the Fokker-Planck equation~\eqref{e:mckean-vlasov},
subject to the boundary conditions of zero particle flux through the
boundaries of our numerical interval. We approximate the derivative with a
pseudo-spectral Chebyshev collocation method, and the integral term with a
Clenshaw-Curtis quadrature \cite{Peter_2012}. For marching in time, we adopt
the \textsf{ode15s} function of \textsf{MATLAB}, which is based on an
implicit scheme combining backward differentiation and adaptive time
stepping.
%
%%%%%%%%%%%%%%%%%%%%%%%%%%%%%%%%%%%%%%%%%%%%%%%%%%%%%%%%%%%%%%%%
%
\section{Results of numerical simulations}
\label{sec:numeric} We now resent numerical results for bifurcation diagrams
and certain time-dependent simulations using the methodologies described
above. The results are obtained for the three types of polynomials listed in
Section~\ref{subsec:models}.
%
%%%%%%%%%%%%%%%%%%%%%%%%%%%%%%%%%%%%%%%%%%%%%%%%%%%%%%%%%%%%%%%
%
\subsection{Polynomial potentials}\label{s:num_poly}
%
%%%%%%%%%%%%%%%%%%%%%%%%%%%%%%%%%%%%%%%%%%%%%%%%%%%%%%%%%%%%%%%
Here we consider confining potentials of the form
\begin{equation}\label{eq:polyn_potential}
V(x) = \sum_{\ell=1}^{M} a_\ell x^{2\ell},
\end{equation}
where $M = 2,3,\dots$. This introduces additional wells in the confining
potential, corresponding to different local minima-maxima in the potential,
which in turn translates into various pitchfork and/or saddle-node
bifurcations from the mean-zero solution, with the corresponding changes in
stability, as will be seen below.

As mentioned earlier, if $V(x)$ is a polynomial, we can obtain a system of
ODEs verified by the moments $f(x) = x^k$, in a similar manner to what was presented 
in~\cite{Dawson1983} for the bistable potential $V(x) = \frac{x^4}{4}-\frac{x^2}{2}$ 
and which easily extends to arbitrary polynomial potentials.  
To this end, we consider the system of SDEs~\eqref{e:sde-general} with $W(x) =
\frac{x^2}{2}$, and by defining $m_k(t) = \frac{1}{N}\sum_{\ell = 1}^N
\left(X_t^\ell\right)^k$, we rewrite it as
\begin{equation}\label{eq:SDE_System_for_moments}
dX_t^i = -V'(X_t^i) \ dt + \theta (m_1(t) - X_t^i) \ dt + \sqrt{2\beta^{-1}} dB_t^i,
\end{equation}
 $i = 1,\dots, N$. Using It\^{o}'s Lemma,
we can obtain a system of SDEs for the moments $f(x) = x^k$:
\begin{equation}\label{Ito_Lemma}
 dx^k(t) =  k\left[-\theta x^k(t) + \left(\theta m_1(t) - V'(x(t))\right)x^{k-1}(t)
+ \beta^{-1}(k-1)x^{k-2}(t) \right] dt + \sqrt{2\beta^{-1}} k x^{k-1} dw(t),
\end{equation}
where $w(t)$ is white noise. Replacing $V$ by its expression, noticing that $m_0(t)=1$,
and taking expectations, we obtain a system of ODEs for $m_k(t)$, $k=1,2,\dots,\infty$.
Unfortunately, due to the structure of the potentials and the nonlinearity
involved, this cannot be expanded for other types of potentials.

In the so-called ferromagnetic case, $V(x) = V_4(x) = \frac{x^4}{4}-\frac{x^2}{2}$, we obtain the following system of ODEs:
\begin{equation}\label{eq:Moments_4thDegree}
\dot{m}_k(t) = k\left((1-\theta) m_k(t) + \theta m_1(t)m_{k-1}(t) + \beta^{-1}(k-1)m_{k-2}(t) - m_{k+2}(t)\right)
\end{equation}
Other examples include higher degree polynomials:
\begin{align}
\label{eq:6thDegree}V_6(x) &= h\left(x^6 - 5x^4 +4x^2\right) = hx^2(x^2-1)(x^2-4),\\
\label{eq:8thDegree}  V_8(x) &= h\left(x^8 - 14x^6+49x^4 - 36x^2\right)  = hx^2(x^2-1)(x^2-4)(x^2-9),
\end{align}
where we have added a pre-factor $h$ in the higher degree polynomials. This is to make the barrier at $x=0$ (and/or others) more relevant. For the $6^{th}$ degree case, we obtain the following system of ODEs for the moments $m_k$:
\begin{equation}\label{eq:Moments_6thDegree}
 \dot{m}_k = k\left(-\left(8h+\theta\right) m_k + \theta m_1m_{k-1}
  + \beta^{-1}(k-1)m_{k-2} + 20hm_{k+2} - 6hm_{k+4}\right),
\end{equation}
and in the $8^{th}$ degree case
\begin{equation}\label{eq:Moments_8thDegree}
\dot{m}_k = k\left(\left(72h-\theta\right) m_k + \theta m_1m_{k-1}+ \beta^{-1}(k-1)m_{k-2}   - 196hm_{k+2} + 84hm_{k+4}-8hm_{k+6}\right).
\end{equation}
We truncate the system at $k=21$ and solve for the first moment, performing
arclength continuation. 

As an illustration of our methods, we plot in Fig.~\ref{FigFE1} a bistable
potential $V(x) = \frac{x^4}{4}-\frac{x^2}{2}$ and the corresponding
bifurcation diagram of $m$ as a function of $\beta$. We used both arclength
continuation for the self-consistency equation and the method of moments, as
well as the free-energy method, obtaining similar results in all cases.
\begin{figure}[h!]
\centering
    \includegraphics[scale=1.25]{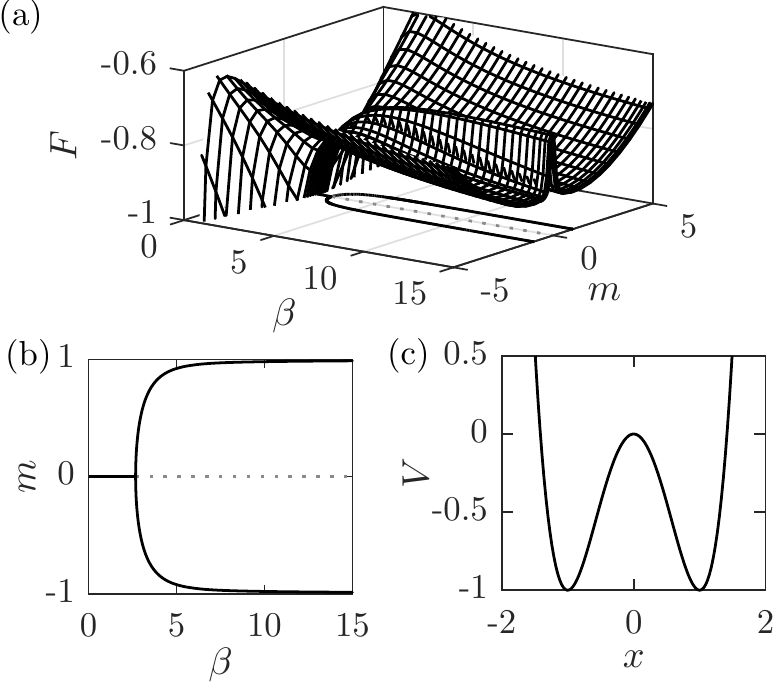}
    \caption{Free energy surface (a) and the corresponding bifurcations of the steady states of the Fokker-Planck equation (b) in a simple bi-stable potential $V(x) = \frac{x^4}{4}-\frac{x^2}{2}$ (c). The control parameter is $\beta$, and the solution norm is given by the first moment $m$, with $\theta = 0.5$. In (a) and (b), solid and dotted lines correspond to stable and unstable steady states of the Fokker-Planck equation, respectively.
    \label{FigFE1}}
\end{figure}

In Fig.~\ref{FigFE1}, we illustrate our free-energy method to obtain the
bifurcation diagram. We fix $\theta = 0.5$ and compute the free-energy
surface for functions $p(x;m,\theta,\beta)$ given
by~\eqref{e:inv-meas-mckean} (without assuming that $m$ verifies
\eqref{e:self-consist}) and proceed to compute its extrema, which are
contoured below.

We present one more example of a polynomial confining potential, where $V(x)
= V_8(x)$ from Eq.~\eqref{eq:8thDegree}. We fix $h = 0.001$ and compute the
bifurcation diagram of $m$ as a function of $\beta$ for $\theta = 1.5$
(\ref{FigFE8}(a)) and $\theta = 2.5$ (\ref{FigFE8}(b)). We observe that the
topology of the bifurcation diagram is different for the two values of
$\theta$: for small $\theta$ the effects of the interaction do not affect the
convexity of the free energy and we find three pitchfork bifurcations from
the mean-zero solution with the corresponding (expected) changes in
(linear/local) stability of all the solutions. However, for large enough
$\theta$, the effects of the interaction change the convexity of the
free-energy functional, and we find only one pitchfork bifurcation,
accompanied by two saddle-node bifurcations. We observe, however, that the
number of solutions for large $\beta$ is still $7$, which is the number of
equilibrium points of the confining potential $V_8$. In fact this is found
for all the confining potentials studied in this work.
\begin{figure}[h!]
\centering
    \includegraphics[width=0.85\linewidth]{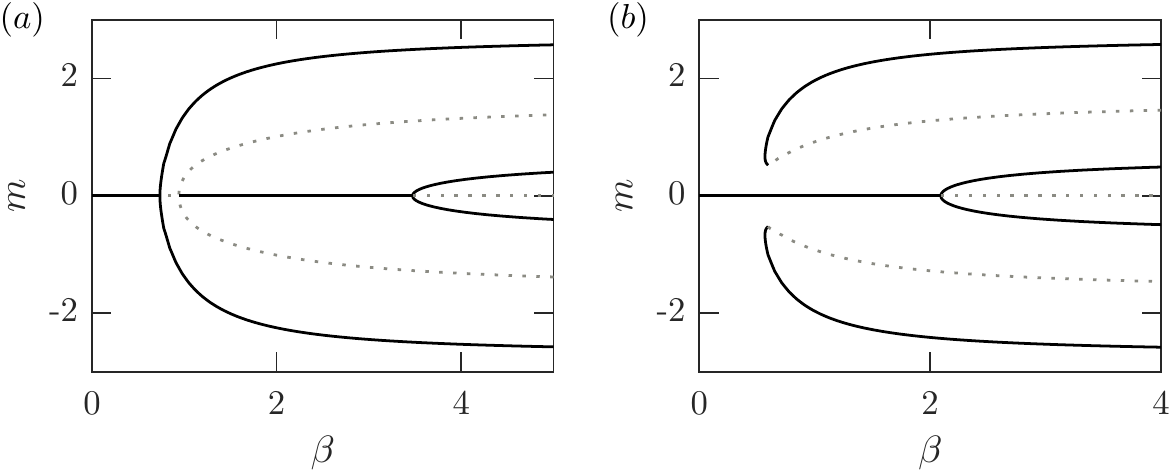}%
    \caption{Phase diagram in the $\beta$--$m$ space, for the potential $V_8(x)$, at $\theta=1.5$ (a) and
$\theta=2.5$ (b). Solid and dotted branches are, respectively, stable and unstable steady states of the Fokker-Planck
equation. \label{FigFE8}}
\end{figure}

The effect of $\theta$ in the topology of the bifurcation diagram can be
further analyzed by studying the critical temperature $\beta_C$ at which a
pitchfork bifurcation from the mean-zero solution occurs,  as a function of
$\theta$ . This is given by Eq.~\eqref{eq:beta_C}. We solve this equation for
$\beta_C$ as a function of $\theta$ again performing arclength continuation,
and plot the results in Fig.~\ref{fig:critical_temp}.

\begin{figure}[h!]
\centering
    \includegraphics[width=\linewidth]{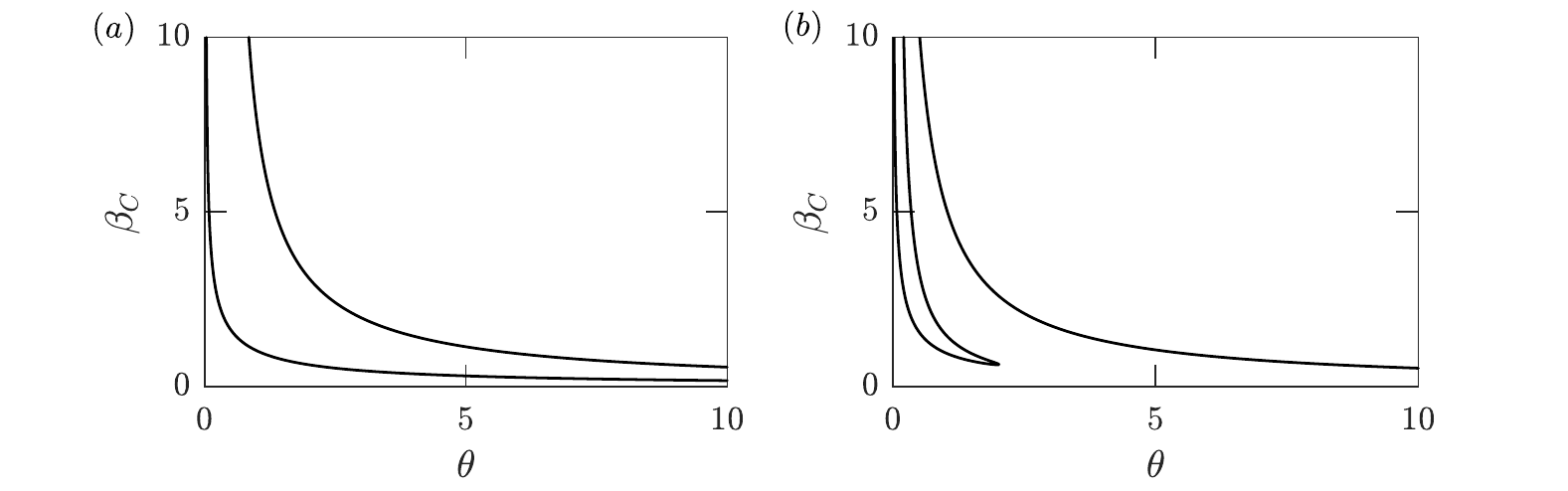}%
    \caption{Critical temperature $\beta_C$ as a function of $\theta$ for the potentials $V_6(x)$ with $h = 0.1$
(left panel) and $V_8(x)$ with $h = 0.001$ (right panel) given by
Eqs~\eqref{eq:6thDegree} and~\eqref{eq:8thDegree} respectively.
    \label{fig:critical_temp}}
\end{figure}

We also see the existence of two branches for the $6^{th}$ degree polynomial
potential, which corresponds, as expected, to the existence of two pitchfork
bifurcations in the bifurcation diagram of the first moment, $m$, as a
function of the inverse temperature $\beta$. Interestingly, for the $8^{th}$
degree polynomial, there are three branches of solutions for sufficiently
small $\theta$ but these branches merge for $\theta \approx 2$. This
indicates that the convexity of the free-energy functional changes as a
function of $\theta$, which also means that for polynomial potentials it is
important to keep track of the bifurcation structure as a function of both
$\beta$ and $\theta$. This change of convexity leads to the behavior observed
in Fig.~\ref{FigFE8}: for small enough $\theta$ (Fig.~\ref{FigFE8}(a)) there
exist three pitchfork bifurcations, with the corresponding change of
stability in the mean-zero solution, while for large values of $\theta$
(Fig.~\ref{FigFE8}(b)) there is only one pitchfork bifurcation, with the
$m=0$ solution remaining the global minimum of the free-energy for larger
values of $\beta$. The other stationary solutions still exist, but they
appear as discontinuous bifurcations (corresponding to first-order phase
transitions).

Finally, we study the effect of breaking the symmetry of polynomial
potentials by adding a tilt to a bistable potential. Specifically, we
consider potentials of the form
\begin{equation}\label{eq:tilted_bistable}
V(x) = \frac{1}{a_0}\left(\frac{x^4}{4}-\frac{x^2}{2}\right) + \kappa x.
\end{equation}
Fig.~\ref{FigDyn0} depicts the bifurcation diagram of $m$ as a function of
$\beta$ for this potential, with $\theta = 2.5$, $a_0 = 0.249998581434761$,
and $\kappa = 0, \, 0.01, \, 0.1, \, 1$. Evidently a break in the symmetry of
the bifurcation diagram appears, which becomes increasingly clear as $\kappa$
increases.
\begin{figure}[h!]
\centering
    \includegraphics[scale=1]{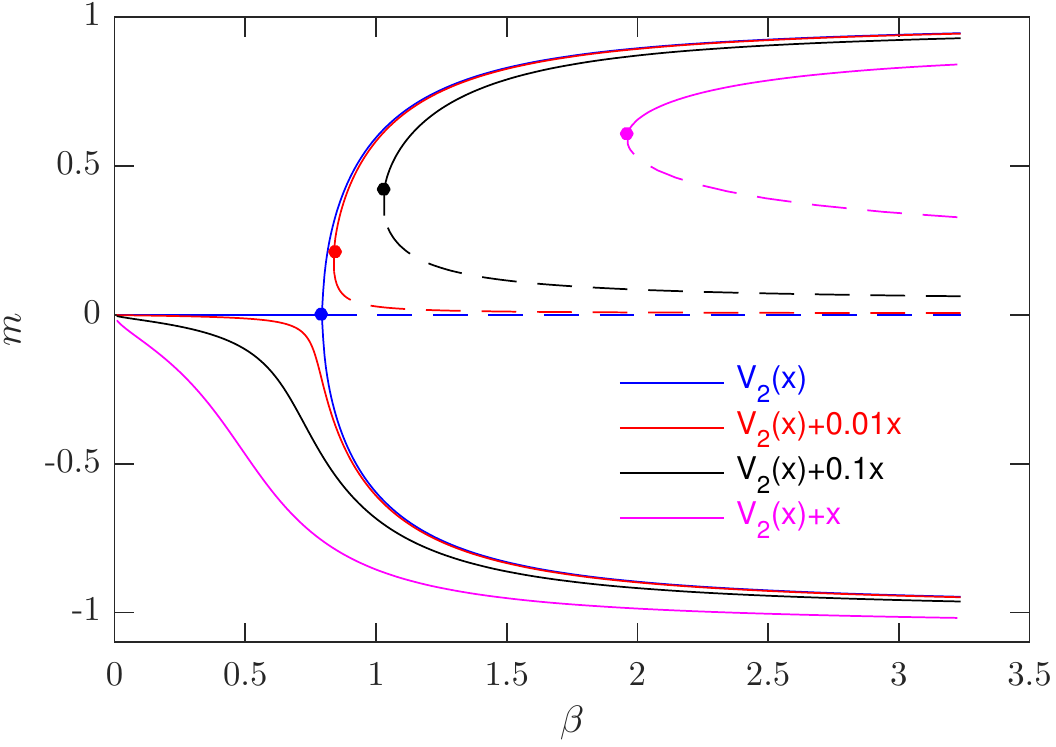}%
    \caption{Bifurcation diagrams of $m$ as a function of $\beta$ for tilted bistable potentials given by Eq.~\eqref{eq:tilted_bistable} with
$a_0 = 0.249998581434761$,  and $\kappa = 0, \, 0.01, \, 0.1, \, 1$ (see legend). Here, we used $\theta = 2.5$ The symmetric pitchfork bifurcation is
broken at any $\kappa>0$. We note that the locus of the critical points forms a distinct critical line.
    \label{FigDyn0}}%
\end{figure}

The existence of stationary solutions raises the question of relevance of
these solutions which is related to the way they attract initial conditions.
An answer to this question can be given by means of time-dependent
computations. Figs.~\ref{fig:tilted1} and~\ref{fig:tilted2} depict the time
evolution of the first moment as a function of time (top panel) and the
histogram and corresponding distribution (solution of the time dependent
Fokker--Planck equation) in the bottom panel, at two selected times marked in
dashed lines in the top panel. Both figures correspond to the tilted bistable
potential given by Eq.~\eqref{eq:tilted_bistable} with $\kappa=0.1$, $\theta
= 2.5$ and $\beta = 1.5$. Fig.~\ref{fig:tilted1} shows the evolution
starting from a $N(0.1,\beta^{-1})$, while in Fig.~\ref{fig:tilted2} the time
evolution is started from a $N(-0.1,\beta^{-1})$ distribution.
\begin{figure}[h!]
\centering
    \includegraphics[width=0.75\linewidth]{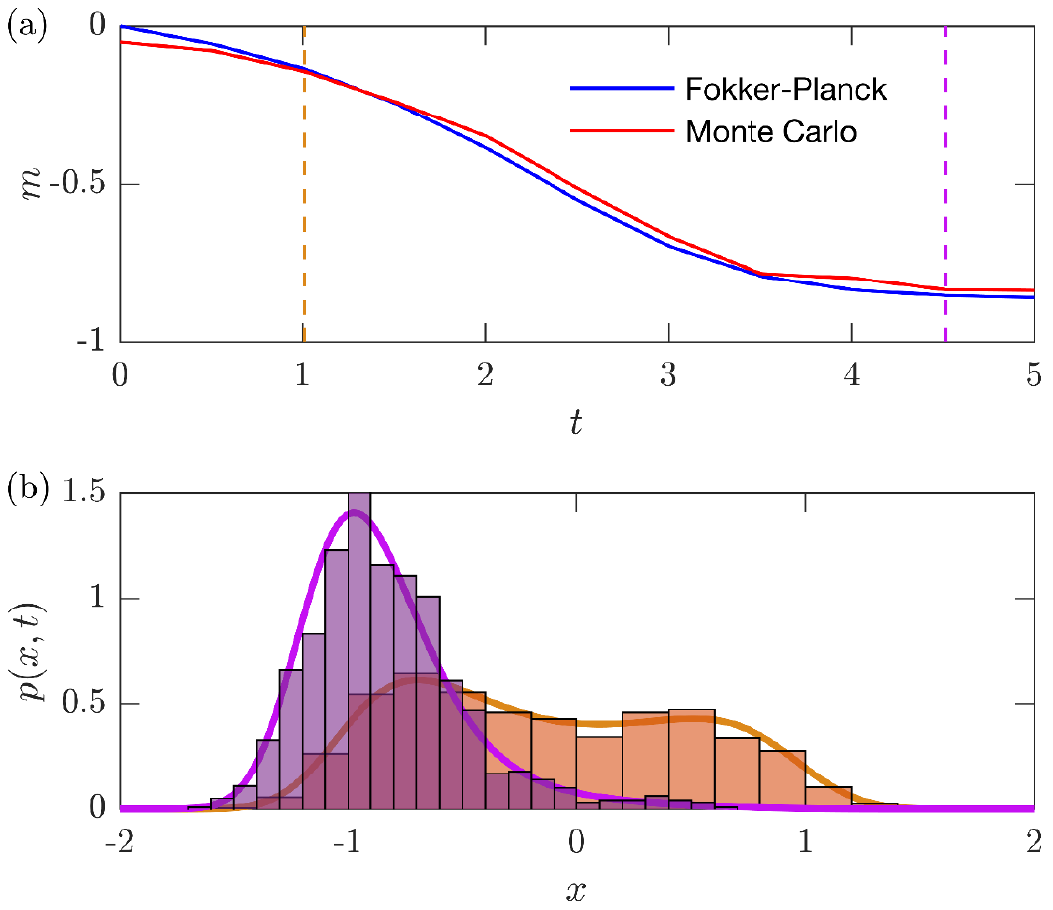}
    \caption{Evolution of the density and first moment in a tilted bistable potential given in Eq.~\eqref{eq:tilted_bistable},
    with $\kappa=0.1$ and an initial condition distributed according to a
    $N(0,\beta^{-1})$ distribution.
    (a) Mean position against time for the evolution of the Fokker--Planck equation (blue line) and for the
interacting particles system (red line). (b)
    Fokker-Planck distributions and corresponding MC histograms for
    selected times, designated in (a) by vertical dashed lines of
    respective colors (see also the Supplemental Material movies \texttt{MovM1.avi}, showing simultaneously the first moment on the bifurcation diagram and the distribution, and \texttt{MovFig8.avi}, showing the good agreement between Fokker-Planck and MC simulations).}
    \label{fig:tilted1}
\end{figure}
\begin{figure}[h!]
\centering
    \includegraphics[width=0.75\linewidth]{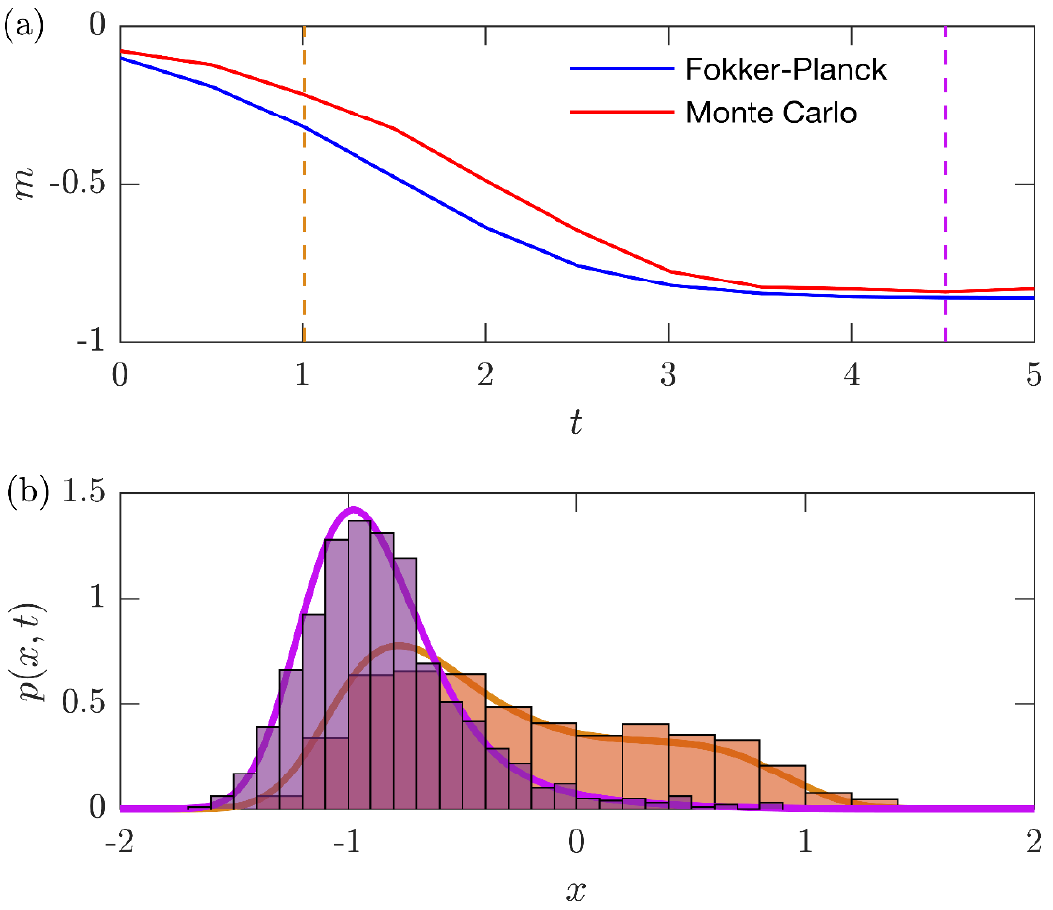}
    \caption{Evolution of the density and first moment in a tilted bistable potential given in Eq.~\eqref{eq:tilted_bistable}, with $\kappa=0.1$ and an
    initial condition distributed according to a $N(-0.1,\beta^{-1})$ distribution.
    (a) Mean position against time for the evolution of the Fokker--Planck equation (blue line) and for the
interacting particles system (red line). (b) Fokker-Planck distributions and
    corresponding Monte-Carlo histograms for selected times, designated
    in (a) by vertical dashed lines of respective colors (see also movie MovFig9.avi in Supplemental Material showing
    good agreement between Fokker-Planck and MC simulations).
    \label{fig:tilted2}}
\end{figure}

An overall good agreement is observed between the solution of the time
dependent Fokker--Planck equation and the corresponding MC simulations. It is
worth noting that, without the tilt, the dynamics reproduced in
Fig.~\ref{fig:tilted1} would have evolved to the upper branch of the
bifurcation diagram represented in Fig.~\ref{FigDyn0}, but instead we
observe the breaking of symmetry caused by the tilt: the particles would have
to pass through an unstable equilibrium point (represented by a black dashed
line in Fig.~\ref{FigDyn0}) in order to reach the upper branch.

%%%%%%%%%%%%%%%%%%%%%%%%%%%%%%%%%%%%%%%%%%%%%%%%%%%%%%%%%%
%

\subsection{Rational potentials }\label{s:num_zofia}
%
%%%%%%%%%%%%%%%%%%%%%%%%%%%%%%%%%%%%%%%%%%%%%%%%%%%%%%%%%%
Here, we consider potentials of the form
\begin{equation}\label{eq:zofia}
V(q) = \frac{1}{\sum_{\ell=-N}^N \delta_{\ell} |q - c_{\ell} q_{\ell} |^{-2}}.
\end{equation}
with both deterministic and random distributions of
$\left\{\delta_\ell,c_\ell\right\}$. We show two examples in particular. The first one is a
potential with $6$ minima, which are symmetrically located and have the same
depths (as well as heights of the corresponding local maxima),
\begin{equation}\label{6_zeros}
V(x) = h\left((x-1)^{-2} + (x+1)^{-2} + (x-2)^{-2}+(x+2)^{-2} + (x-3)^{-2}+(x+3)^{-2}\right)^{-1}.
\end{equation}
The free energy surface and bifurcation diagram for this case are presented
in Fig.~\ref{FigFE6}.

\begin{figure}[h!]
\centering
    \includegraphics[width=0.75\linewidth]{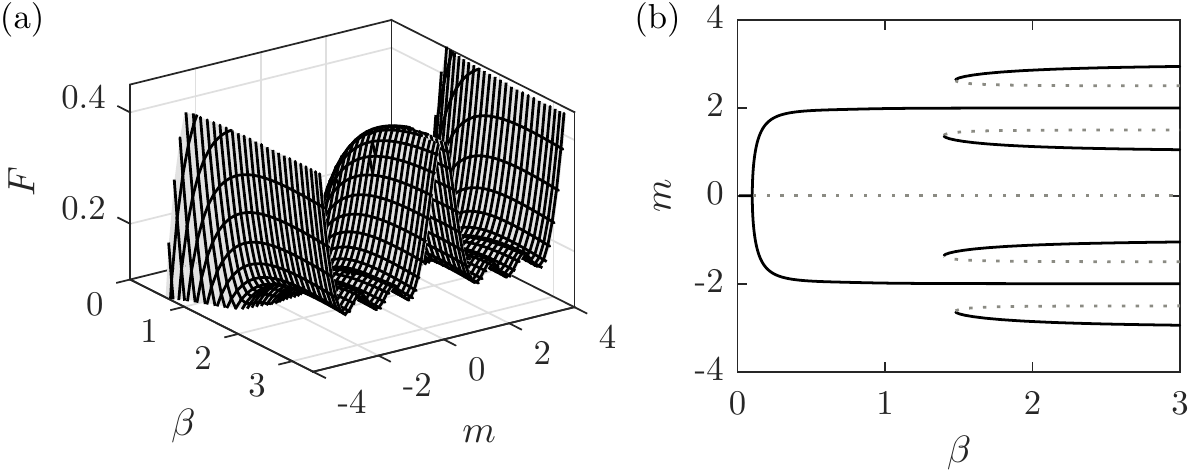}%f
    \caption{Free energy surface (a) and the corresponding bifurcation diagram in the $\beta$--$m$ parameter space (b) for the potential
    in Equation \eqref{6_zeros} with $\theta=5$ and $h=1$. Stable and unstable
    branches are plotted with solid and dotted lines respectively.
    \label{FigFE6}}
\end{figure}

We now consider the potential from Eq.~\eqref{eq:zofia} with $N=20$
minima positioned at $x=-10,-9,\dots10$, and $c_\ell=1$, $\ell=1,\dots,20$.
Fig.~\ref{FigBif0} depicts the realization of a random potential, where the
energy barriers separating the local minima of the potential are uniformly
distributed random variables, i.e. $\delta_\ell\sim U([0,1])$.

\begin{figure}[h!]
\centering
    \includegraphics[width=0.75\linewidth]{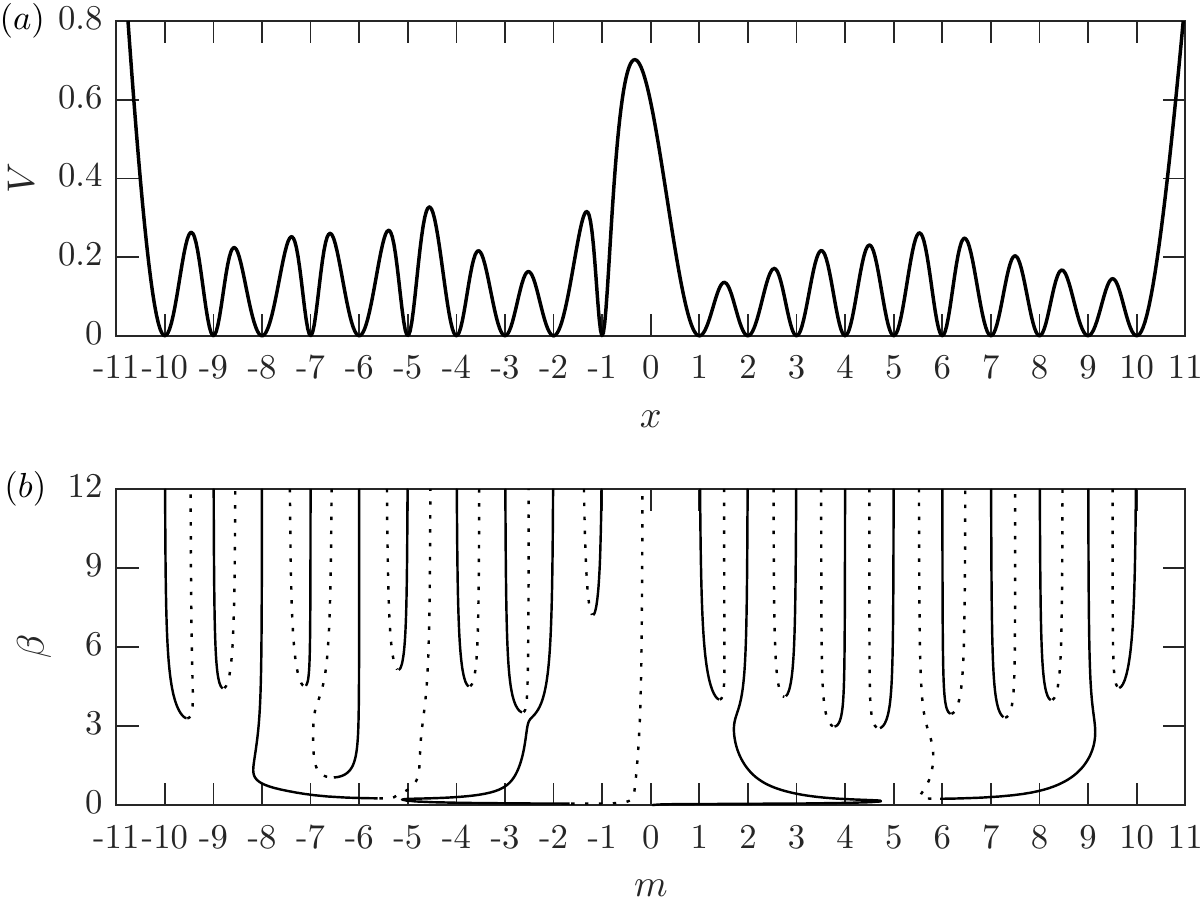}%
    \caption{(a) A potential from Equation \eqref{eq:zofia}, with the minima positioned at integers between -10 and 10, separated by local maxima of arbitrary heights.
    (b) Corresponding bifurcation diagram of the steady states of the Fokker-Planck Equation \eqref{e:mckean-vlasov}, obtained from the computed free-energy surface.
    Stable and unstable states are designated by solid and dashed curves, respectively.
    \label{FigBif0}}%
\end{figure}

Interestingly, the random depths of each local minima -- these correspond to
higher or lower energy barriers -- affect the stability of each well with
respect to each other. Further insight into the effect of the random depths
on the dynamics of the system can be obtained via the time-dependent
evolution of both the particle system and the mean-field Fokker-Planck
equation. The corresponding results are plotted in Figs.~\ref{FigDyn1}
and~\ref{FigDyn2}.
\begin{figure}[h!]
\centering
    \includegraphics[width=0.75\linewidth]{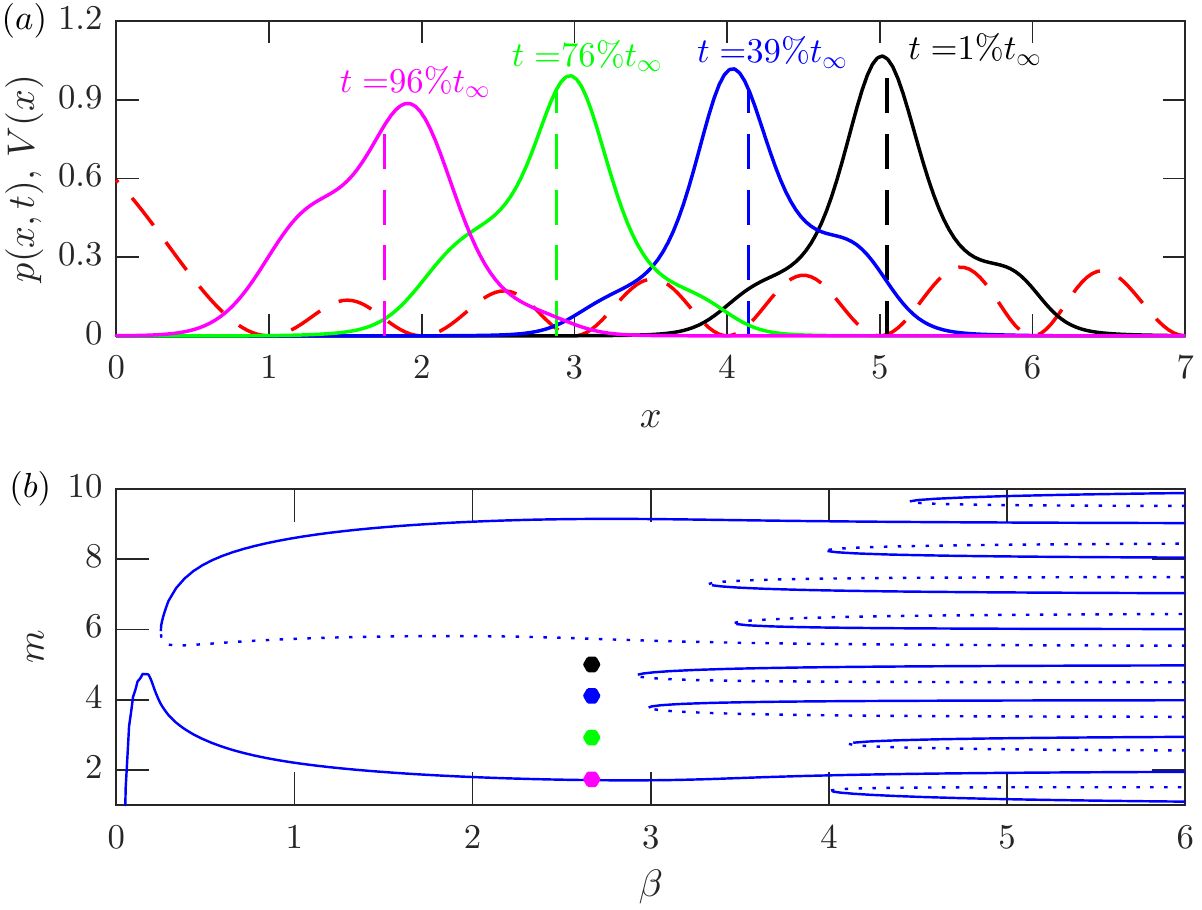}%
    \caption{Numerical solution of the Fokker-Planck equation \eqref{e:mckean-vlasov}, for the potential given in Fig.~\ref{FigBif0}
    (see also \texttt{MovFig12.avi} in the supplemental material).
    (a) $p(x,t)$ (solid curves) and $V(x)$ (dashed curve). The first moments for each $p(x,t)$ are designated by dashed verticals,
    and the respective values of $t$ are provided. (b) States from (a) on the bifurcation diagram.
    We note that the basins of attraction of the stable states are effectively demarcated by the metastable branches of the bifurcation diagram.
    Here $t_{\infty}$ is the time upon which the relative norm of the difference between the steady state and the time-dependent solution is less than 5\%.
    \label{FigDyn1}}%
\end{figure}

Fig.~\ref{FigDyn1} shows the solution of the Fokker--Planck equation as a
function of time for four different times (top panel), and the
corresponding location of the mean of the solution at each of these times
on the bifurcation diagram (lower panel).
We have introduced an empirical ``convergence'' time $t_{\infty}$, defined as the time at which the relative norm between the time-dependent solution and the steady state is less than 5\%:
\begin{equation}
{\left\|p_{\infty}(x)-p(x,t_{\infty})\right\|}\leq 0.05{\left\|p_{\infty}(x)\right\|},
\end{equation}
where $\|\cdot\|$ stands for the Euclidean norm over $\mathbb{R}$. It should
be noted that as we approach the turning points of the bifurcation diagram in
Fig.~\ref{FigDyn1}(b), by e.g. fixing $\beta$ and increasing $m$, the system
slows down and eventually ``freezes" and gets pinned to the branch of
metastable solutions, which terminates at the turning point. Hence, in the
neighborhood of turning points, we have a ``glass"-like behavior
(e.g.~\cite{Berthier-Biroli_2011}), and the potential in Fig.~\ref{FigBif0}
can be viewed as a ``glassy potential". Movie \texttt{MovM1.avi} in the Supplemental Material 
shows such a pinning transition for a tilted bistable potential, which clearly occurs at the boundary 
of the basin of attraction of the upper U-branch. 
For model prototype systems, such as the Swift-Hohenberg
equation, the time between two consecutive transitions can be estimated via
weakly-nonlinear analysis in the vicinity of the turning
points~\cite{Burk-Knobloch_2006}. But our equations are too involved to be
amenable to analytical treatment of this type.

We can explore the behavior in Fig.~\ref{FigDyn1} further by plotting the
first moment of the solution as a function of time -- we do so in
Fig.~\ref{FigDyn2}. The top panel displays the first moment as a function
of time for the solution of the Fokker--Planck equation (full blue line)
compared with two independent runs of the particle system. while the bottom
panel compares the Fokker--Planck solution with the histograms from the
particle simulations. Each particle run had $N=1000$ particles and the
simulations shown use $\theta = 1.5$ and $\beta\approx 2.66$.
\begin{figure}[h!]
\centering
    \includegraphics[width=0.75\linewidth]{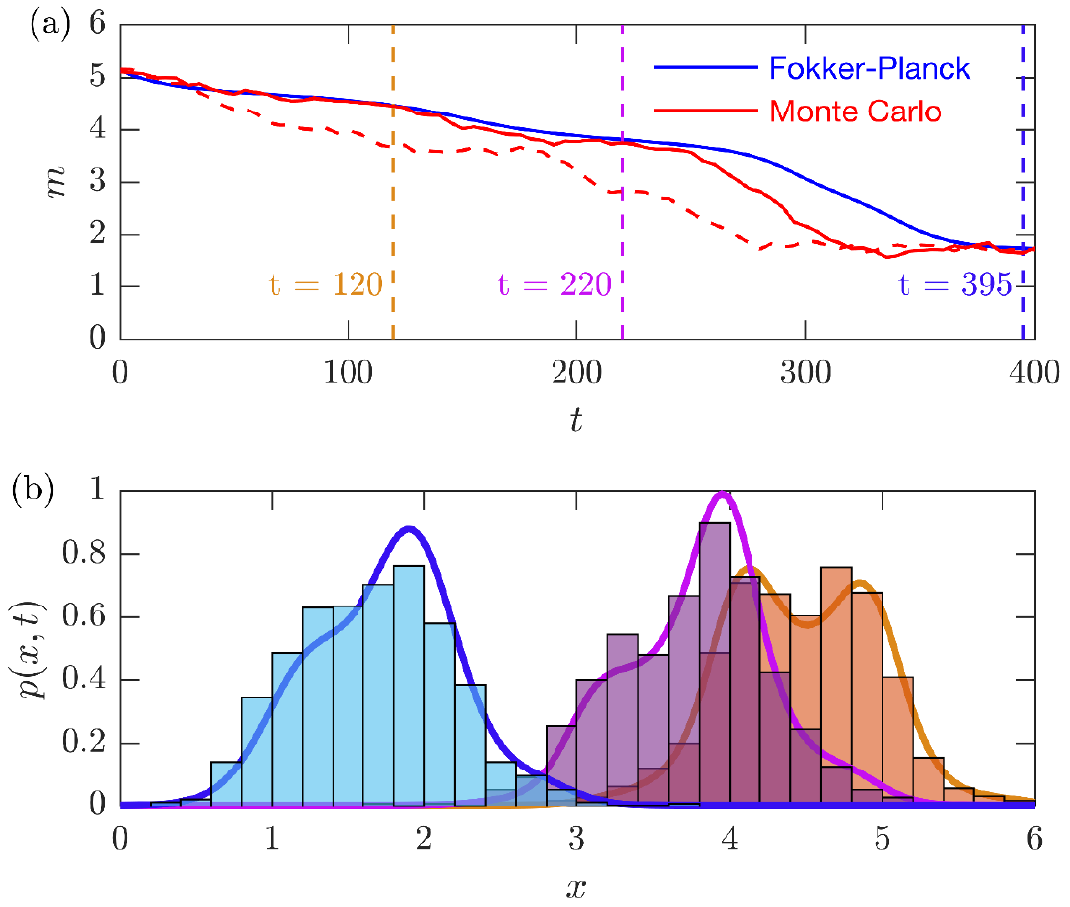}%
    \caption{Same as in Fig.~\ref{FigDyn1}, against the corresponding Monte--Carlo simulation data. (a) Mean position against time for the evolution of the Fokker--Planck equation (full blue line) and the interacting particles system for two realisations of the noise (full and dashed red lines).
    (b) Fokker-Planck distributions and corresponding Monte Carlo histograms for selected times, designated in (a) by vertical dashed lines of respective colors.
    We can see an overall good agreement between the simulations and computations.
    \label{FigDyn2}}%
\end{figure}

%%%%%%%%%%%%%%%%%%%%%%%%%%%%%%%%%%%%%%%
%
\subsection{Piecewise linear potentials with quadratic growth at infinity}\label{sec:P-L}
%
%%%%%%%%%%%%%%%%%%%%%%%%%%%%%%%%%%%%%%%%%%%%%%%

Here we will replace the confining potential with a piecewise linear
approximation with quadratic growth at infinity. Our motivation for this is
that we can now compute the partition function $Z(m,\theta,\beta)$, the mean
$R(m)$ and the variance analytically (in terms of the error function ${\rm
erf}(x) = \frac{2}{\sqrt\pi}\int_0^x e^{-x^2} \ dx$). We consider a potential
$V$ with $2M$ local minima, $X = (x_1,\dots,x_{2M})$, with depth $h_i$, and
the local maxima, located at $x_{i+\frac{1}{2}} = \frac{x_i+x_{i+1}}{2}$
(note that $x_{-1/2} = 0$), have height $H_i$. In this case, $V$ is given by
\begin{equation}\label{eq:P-L-General} V(x) =
\left\{\begin{array}{lr}
\frac{x^2 -x_1^2}{2} + h_1,& \tiny{x\leq x_1}, \\
\frac{(H_i-h_i)x}{x_{i+\frac{1}{2}}-x_i} +\frac{h_ix_{i+\frac{1}{2}}-H_ix_i}{x_{i+\frac{1}{2}}-x_i},& x_i<x<x_{i+\frac{1}{2}}, \\
\frac{(h_i-H_i)x}{x_{i+1}-x_{i+\frac{1}{2}}} + \frac{H_ix_{i+1}-h_ix_{i+\frac{1}{2}}}{x_{i+1}-x_{i+\frac{1}{2}}},  & x_{i+\frac{1}{2}}<x<x_{i+1},\\
\frac{x^2 -x_{2M}^2}{2} + h_{2M}&  x\geq x_{2M}.
\end{array}\right.
\end{equation}
$i=1,\dots,2M$.

As mentioned before, we can compute the quantities $Z(m,\theta,\beta), \,
R(m)$ and the second moment analytically, and will present two illustrative
cases: a  symmetric potential with $2M=6$ wells and a non-symmetric potential
with random heights and depths. Details of the analytical calculations are
given in Appendix~\ref{ap:1}.

\subsubsection{Symmetric potentials with six wells at same heights}\label{sec:P-L-6}

Below we present the results for $V(x)$ given by Eq.~\eqref{eq:P-L-General}
with $M=3$. We use $X = (-3,-2,-1,1,2,3)$, $x_{i+\frac{1}{2}} =
\frac{x_i+x_{i+1}}{2}$, $h_i = 0$, and $H_i=1$. We compute $Z(m,
\theta,\beta)$ and $R(m)$ using Eqs~\eqref{eq:ap_Z} and~\eqref{eq:ap_M}
respectively, and solve for $R(m)=m$ using arclength continuation. We plot
our results in Fig.~\ref{FigFE100}. Here, we choose $\theta = 5$. Panel (a)
shows the solution of the self-consistency equation $R(m)=m$ (or, rather
$R(m)-m=0$) for $\beta = 1$ and $10$. Panel (b) shows the bifurcation diagram
of $m$ as a function of $\beta$ and (c) shows the critical temperature
$\beta_C$ as a function of $\theta$, which was obtained using
Eq.~\eqref{eq:ap_V} for $m=0$.

\begin{figure}[h!]
    \includegraphics[width=1.125\linewidth,trim={33 0 0 0},clip]{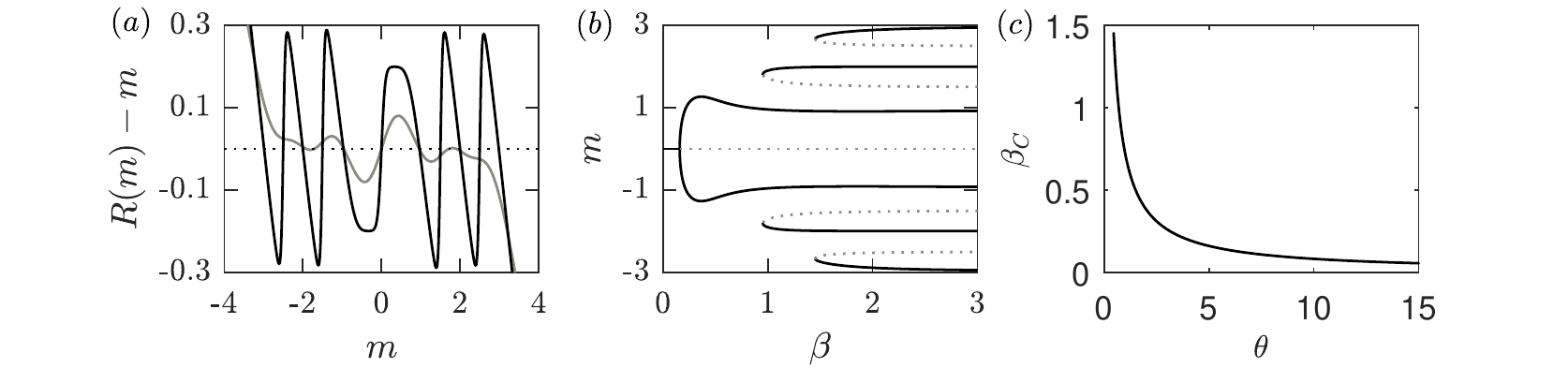}%
    \caption{Phase diagram for the case of the piecewise linear potential \eqref{eq:P-L-General} with 6 wells for $\theta = 5$. Panel (a) shows the solution of $R(m)-m=0$
for $\beta=1$ (black) and $\beta=10$ (grey). Horizontal dotted line is drawn at $R(m)-m=0$. Panel (b) shows the bifurcation diagram in the $\beta$--$m$ space.
Panel (c) shows the critical temperature $\beta_C$ at which a pitchfork bifurcation occurs from the mean-zero solution as a function of $\theta$.
    \label{FigFE100}}%
\end{figure}

The results are what we would expect: there are eleven equilibrium points
which correspond to each local minimum and maximum. We point out the
similarity between the bifurcation diagram in panel (b) to the one presented
in Fig.~\ref{FigFE6}, which shows that a piecewise linear potential is a good
first approximation.

\subsubsection{Potentials with four wells at different (randomly distributed) heights}\label{sec:P-L-4-R}
Our final test is the case where $V(x)$ is given by
Eq.~\eqref{eq:P-L-General} with $M=2$, but with the minima and maxima heights
and depths randomly distributed. We use,  $X = (-2,-1,1,2)$,
$x_{i+\frac{1}{2}} = \frac{x_i+x_{i+1}}{2}$, and generated $h_i$ and $H_j$,
$i=1,\dots,4$, $j=1,\dots 3$ randomly, following a uniform distribution.
%The particular values used here are $h_1 = -0.562802131,   h_2 =  0.254622527,
% h_3  = 0.580390625,   h_4 =  0.544910251$ and $H_1 =1.138390118,   H_2 =
% 1.906610119,   H_3 = 1.853122322$.

As before, we compute the relevant functions of $m$, $\theta$ and $\beta$
using Eqs~\eqref{eq:ap_Z}-\eqref{eq:ap_V} and depict in Fig.~\ref{FigFE10}
the solution of $R(m)-m=0$ for $\beta = 1, \, 10$ and $\theta = 5$ (panel
(a)), the bifurcation diagram of $m$ as a function of $\beta$ for $\theta =
5$ and the critical temperature $\beta_C$ as a function of $\theta$.
\begin{figure}[h!]
\centering
 \includegraphics[width=1.125\linewidth,trim={33 0 0 0},clip]{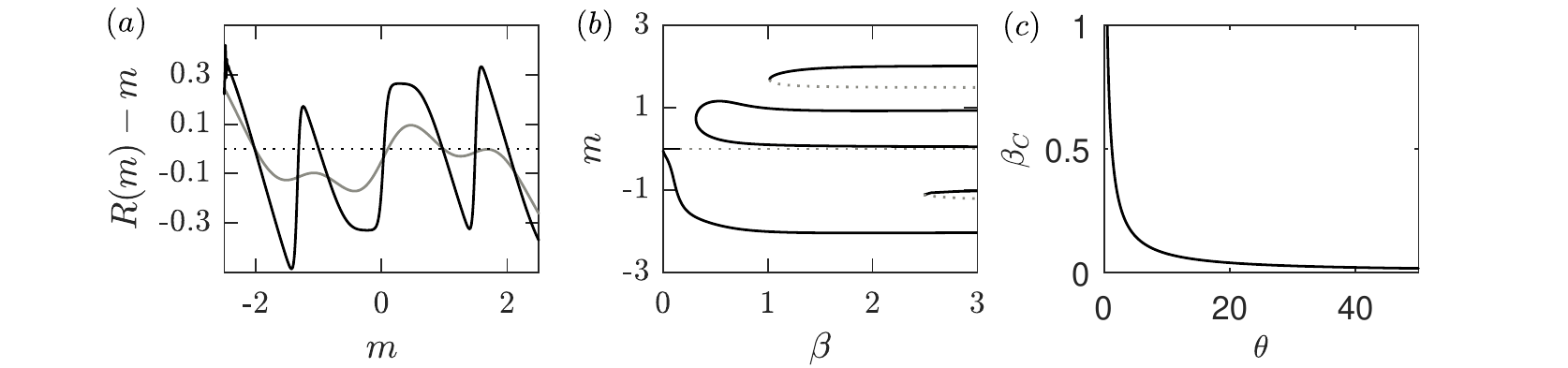}%
    \caption{Phase diagram for the case of the piecewise linear potential \eqref{eq:P-L-General} with 4 wells at random heights and depths for $\theta = 5$.
Panel (a) shows the solution of $R(m)-m=0$ for $\beta=1$ (black) and $\beta=10$ (grey). Horizontal dotted line is drawn at $R(m)-m=0$.
Panel (b) shows the bifurcation diagram in the $\beta$--$m$ space. Panel (c) shows the critical temperature $\beta_C$ at which a pitchfork bifurcation occurs
from the mean-zero solution as a function of $\theta$.
    \label{FigFE10}}%
\end{figure}

A behavior similar to that of the random potential in the previous section
is observed but with a smaller number of equilibrium points.

%%%%%%%%%%%%%%%%%%%%%%%%%%%%%%%%%%%%%%%%%%%%%%%%%%%%%%%%
%
\section{Conclusions}\label{s:conclusions}
We presented a detailed and systematic investigation of the dynamics of a
system of interacting particles in one dimension, moving in a confining
multi-well potential and interacting through a Curie-Weiss/quadratic
potential. Passing formally to the mean-field limit yields the McKean SDE.
The Fokker-Planck equation corresponding to this SDE is the McKean-Vlasov
equation, which is nonlocal and nonlinear, and is the basic equation for our study. It is
a gradient-flow for a certain free-energy functional, establishing also a
connection with thermodynamics.

A wide spectrum of prototypical model potentials was considered: polynomial
(including tilted bistable ones), rational (both deterministic and random)
and piecewise linear potentials with quadratic growth at infinity that allow
for analytical estimates of the partition function, mean and variance. For
all these model potentials, we scrutinized steady states of the McKean-Vlason
equation, constructed bifurcation diagrams and studied their behavior in the
parameter space, and determined the stability of the solution branches. We
also determined the critical points and characterized the structure and
nature of phase transitions. We showed, by means of extensive computations --
including free-energy minimization, arclength continuation, simulations of
the full McKean-Vlasov equation, and MC simulations of the corresponding
particle system -- and of analytical calculations for explicitly solvable
models, that the number of steady states, their stability and structure of
bifurcation diagrams depends crucially on the form of the multi-well
potential and its characteristics, mainly the number and depth of the local
minima of the confining potential. Increasing the complexity of the potential
increases also the complexity of the steady-state bifurcation structure and
dynamics. Even the local minima of the potential may significantly affect the
relaxation dynamics of the system, via the basin of attraction of the
metastable states. Thus, each local minima of the potential gives rise to a
pair of branches (stable and unstable) of the steady state bifurcation
diagram. These branches merge at the critical point, associated with the
respective potential minimum. It is also encouraging that the mean-field
Fokker-Planck equation is in remarkable agreement with the MC simulations of
the system dynamics.

There are also several new avenues of research. Indeed, we believe that the
theoretical-computational framework and associated methodologies presented
here can be useful for the study of bifurcations and phase transitions for
more complicated physical systems. Or indeed for systems where the potential
is known from experiments only, either physical or in-silico ones, and then
our framework can be adopted in a ``data-driven" approach. Of particular
interest would also be extension to multi-dimensional problems.
Two-dimensional problems in particular would be of direct relevance to
surface diffusion and therefore to technological processes in materials
science and catalysis. Other interesting extensions include additional
effects and complexities such as non-Markovian interaction particles, colored
and multiplicative noise and nonreversible
perturbations~\cite{LelievreNierPavliotis2013, DuncanLelievrePavliotis2016}.
Recall also from the Introduction, that our starting point, a system of
interacting particles in a confining potential, retains the main features of
DDFT models. Another interesting study would then be applications of our
framework to such models. Finally, the study of phase transitions for the
stochastic gradient descent dynamics algorithms and of their mean field limit
that are used in the training of neural networks is an intriguing problem,
with important potential applications. We shall examine these and related
questions in future studies.

%%%%%%%%%%%%%%%%%%%%%%%%%%%%%%%%%%%%%%%%%%%%%%%%%%%%
\section*{Acknowledgements}
We are grateful to Ch. Kuehn for useful discussions, particularly about the method of moments.  We acknowledge financial support from the Engineering and Physical Sciences
Research Council (EPSRC) of the UK through Grants No. EP/L027186, EP/L020564/1, EP/K034154/1, EP/P031587/1, EP/L024926/1 and EP/L025159/1.

\appendix
\section{Formulas for $Z(m)$, $R(m)$ and $V(m)$, for piecewise linear potentials with quadratic growth}
\label{ap:1} Here we list the values of $Z(m)$, $R(m)$ and $V(m)$ for a
potential of the form in Section~\ref{sec:P-L}. We consider a general
potential with $2M$ minima which have heights $h_1,\dots, h_{2M}$ and are
located at $x_1,\dots x_{2M}$. There are, therefore, $2M-1$ maxima/barriers
$H_i$, $i=1,\dots,2M-1$, located at $y_i$. Throughout our study we take $y_i
= \frac{x_i+x_{i+1}}{2}$, but the formulas are valid in the general case. We
define
\begin{multline}
\alpha = \sqrt{\frac\beta{2\theta}}, \quad \gamma = \sqrt{\frac\beta{2(\theta+1)}}, \\
s_i = \frac{H_i-h_i}{x_i-x_{i+1}}, \quad  S_i = \frac{H_i-h_{i+1}}{x_i-x_{i+1}}, \\ \textrm{ and }  \quad f(\varsigma,x,s) = \erf(\varsigma(\theta(m-x) + s))
\end{multline}

Using these, we obtain
\begin{multline}\label{eq:ap_Z}
Z(m) = -\frac{\sqrt\pi}\beta\left\{\alpha{\rm e}^{(\alpha m \theta)^2}
\left[\sum_{i=1}^{2M-1}\left( f(\alpha, y_i,2s_i)-f(\alpha, x_i,2s_i)\right)
{\rm e}^{4\alpha^2\left(\frac{   \left(  h_i(y_i-m)+H_i(m-x_i)  \right)  \theta}{x_i-x_{i+1}}+  s_i^2\right)}\right.\right.\\
-\sum_{i=1}^{2M-1}\left( f(\alpha, y_i,-2S_i)-f(\alpha, x_{i+1},-2S_i)\right)
\left.
{\rm e}^{4\alpha^2\left(\frac { \left(h_{i+1}(m-y_i)+H_i(x_{i+1}-m)\right) \theta}{x_i-x_{i+1}}+ S_i^2\right)}\right]\\
\left.
+\gamma{\rm e}^{(\gamma m \theta)^2}
 \left(  \left(f\left(\gamma, x_1,-x_1\right)-1 \right)
{\rm e}^{\gamma^2 \left(  {x_1}^2-2 h_1 \right) (\theta+1) }
-{\rm e}^{\gamma^2  \left( x_{2M}^2-2 h_{2M} \right) (\theta+1)}
 \left( f\left(\gamma, x_{2M},-x_{2M}\right)+1 \right)  \right) \right\}
\end{multline}

{\footnotesize{
\begin{multline}\label{eq:ap_M}
R(m) =
\frac{ \left( {\rm e}^{\beta  \left( x_1 \left( m-\frac{x_1}{2}\right) \theta-h_1 \right)}-{\rm e}^{\beta  \left( x_{2M} \left( m-\frac{x_{2M}}{2} \right) \theta-h_{2M} \right) } \right)}{\theta(\theta+1)\beta Z(m)}+ \frac{\sqrt\pi}{\beta Z(m)}\times\\
\times\left[\frac {\alpha}{\theta}
\left\{\sum_{i=1}^{2M-1}  \left( \theta m-2S_i\right)
\left( f(\alpha,y_i,-2S_i) -f(\alpha,x_{i+1},-2S_i)\right)
{\rm e}^{- 4\alpha^2\left( \frac {  \left(h_{i+1}(y_i-m)+H_i(m-x_{i+1}) \right) \theta}{x_i-x_{i+1}} - S_i^2\right)}\right.
\right.\\
\left.
-\sum_{i=1}^{2M-1} \left(\theta m+2s_i\right) \left( f(\alpha,y_i,2s_i)-f(\alpha, x_i,2s_i)\right)
 {\rm e}^{4\alpha^2 \left(\frac { \left( h_i(y_i-m)+H_1(m-x_i) \right)  \theta}{x_i-x_{i+1}}+  s_i^2\right)}\right\}{\rm e}^{(\alpha m\theta)^2}\\
\left.+\frac {\gamma m\theta{\rm e}^{(\gamma m \theta)^2}}{\left( \theta+1 \right)}
\left\{\left( f\left(\gamma,x_1,-x_1\right)-1\right)
{\rm e}^{\gamma^2 \left( x_1^2-2 h_1 \right) (\theta+1) }-\left( f\left(\gamma,x_{2M},-x_{2M}\right)+1 \right)
{\rm e}^{\gamma^2\left( x_{2M}^2-2 h_{2M} \right) (\theta+1)} \right\}\right]
\end{multline}
}}

{\footnotesize{
\begin{multline}\label{eq:ap_V}
V(m) =
-{\rm e}^{\beta\, \left( x_1 \left( m-\frac{x_1}{2} \right) \theta-h_1 \right) }
\frac{(m+x_1)\theta+x_1}{(\theta+1)^2\beta Z(m)}
+{\rm e}^{ \beta\left( x_{2M} \left( m-\frac{x_{2M}}{2}\right) \theta-h_{2M} \right)}
\frac { (m+x_{2M})\theta+x_{2M} }{\beta \left( \theta+1 \right) ^2 Z(m)}\\
-\frac{\sqrt\pi\gamma\left( \beta\theta^2 m^2+\theta+1 \right) {\rm e}^{(\gamma m \theta)^2}}{(\theta+1)^2\beta^2 Z(m)}
\left[
\left(f\left(\gamma, x_1, -x_1\right) -1\right){\rm e}^{\gamma^2 \left( (x_1^2 -2h_1)(\theta+1) \right) }
-  \left( f\left(\gamma,x_{2M},-x_{2M}\right)+1 \right) {\rm e}^{\gamma^2 \left( (x_{2M}^2-2h_{2M})( \theta+1)\right) } \right]\\
+
\sum_{i=1}^{2M-1} \frac{\left((m+x_i)\theta + 2s_i\right){\rm e}^{\beta\left( x_i \left( m-\frac{x_i}{2} \right) \theta-h_i \right) }-
((m+x_{i+1})\theta-2S_i){\rm e}^{\beta \left( x_{i+1} \left( m-\frac{x_{i+1}}{2} \right) \theta-h_{i+1} \right) }
-2(s_i+S_i) {\rm e}^{\beta\left(  y_i  \left( m -\frac{y_i}{2}\right) \theta-H_i\right)}}
{\theta^2\beta Z(m)}
\\
-\frac{\sqrt\pi\alpha e^{(\alpha m\theta)^2}}{\theta^2\beta^2 Z(m)}
\left(
 \sum_{i=1}^{2M-1}\left( f(\alpha,y_i,2s_i) - f(\alpha.x_i,2s_i) \right)
\left( \beta\theta^2 m^2 + \theta + 4\beta s_i(m\theta+s_i) \right)
{\rm e}^{4\alpha^2 \left(\frac{\left(h_i(y_i-m) + H_i(m-x_i)\right)\theta}{x_i-x_{i+1}}+ s_i^2 \right)} \right.\\
\left.
-\sum_{i=1}^{2M-1} \left(f(\alpha,y_i,-2S_i)-f(\alpha,x_{i+1},-2S_i)\right)
 \left(\beta\theta^2 m^2 + \theta- 4 \beta S_i(\theta m-S_i) \right)
{\rm e}^{-4\alpha^2  \left( \frac{ \left(H_i(m-x_{i+1})+h_{i+1}(y_{i}-m)\right) \theta}{x_i-x_{i+1}}- S_i^2\right)}\right)
\end{multline}
}}

\bibliographystyle{plain}

\end{document}